\theoremstyle{plain}
\newtheorem{theorem}{Theorem}
\newtheorem{remark}{Remark}
\newtheorem{lemma}{Lemma}
\newtheorem{definition}{Definition}
\newtheorem{example}{Example}
\newtheorem{fact}{Fact}
\newlength{\widebarargwidth}
\newlength{\widebarargheight}
\newlength{\widebarargdepth}
\long\def\@makecaption#1#2{
        \vskip 0.8ex
        \setbox\@tempboxa\hbox{\small {\bf #1:} #2}
        \parindent 1.5em  
        \dimen0=\hsize
        \advance\dimen0 by -3em
        \ifdim \wd\@tempboxa >\dimen0
                \hbox to \hsize{
                        \parindent 0em
                        \hfil 
                        \parbox{\dimen0}{\def\baselinestretch{0.96}\small
                                {\bf #1.} #2
                                } 
                        \hfil}
        \else \hbox to \hsize{\hfil \box\@tempboxa \hfil}
        \fi
        }
\long\def\comment#1{}
\newcommand{\matsnorm}[2]{|\!|\!| #1 | \! | \!|_{{#2}}}
\newcommand{\vecnorm}[2]{\| #1\|_{#2}}
\newcommand{\inprod}[2]{\ensuremath{\langle #1 , \, #2 \rangle}}
\newcommand{\kull}[2]{\ensuremath{D_{\mathsf{KL}}(#1\; \| \; #2)}}
\DeclareMathOperator{\var}{var}
\newcommand{\NORMAL}{\ensuremath{\mathcal{N}}}
\newcommand{\BER}{\ensuremath{\mbox{Ber}}}
\newcommand{\Oh}{\ensuremath{\mathcal{O}}}
\newcommand{\Xspace}{\ensuremath{\mathcal{X}}}
\newcommand{\Kspace}{\ensuremath{\mathcal{K}}}
\newcommand{\Rspace}{\ensuremath{\mathcal{R}}}
\newcommand{\boldb}{\ensuremath{\mathbf{b}}}
\newcommand{\bolda}{\ensuremath{\mathbf{a}}}
\newcommand{\xvec}{\ensuremath{\mathbf{x}}}
\newcommand{\evec}[1]{\ensuremath{\widehat{\mathbf{e}}}}
\newcommand{\EE}{\ensuremath{\mathbb{E}}}
\newcommand{\Xhat}{\ensuremath{\mathbf{\widehat{X}}}}
\newcommand{\Yhat}{\ensuremath{\mathbf{\widehat{Y}}}}
\newcommand{\Phat}{\ensuremath{\widehat{P}}}
\newcommand{\yvec}{\ensuremath{\mathbf{y}}}
\newcommand{\zvec}{\ensuremath{\mathbf{z}}}
\newcommand{\1}{\ensuremath{\mathsf{(i)}}}
\newcommand{\2}{\ensuremath{\mathsf{(ii)}}}
\newcommand{\Ind}{\ensuremath{\mathbb{I}}}
\newcommand{\reals}{\ensuremath{\mathbb{R}}}
\newcommand{\fstar}{\ensuremath{f^*}}
\newcommand{\pstar}{\ensuremath{p^*}}
\newcommand{\xstar}{\ensuremath{\xvec^*}}
\newcommand{\ystar}{\ensuremath{\yvec^*}}
\newcommand{\xtilde}{\ensuremath{\widetilde{\xvec}}}
\newcommand{\kstar}{\ensuremath{j^*}}
\newcommand{\quickfigure}[4]{
  \begin{figure}[htbp]
  \centering
  \includegraphics[width=#4]{#3}
  \caption{#2}
  \label{#1}
  \end{figure}
}
\newcommand{\quadfigureexterior}[3]{
\begin{figure}[htbp]
 #3
 \caption{#2}
 \label{#1}
\end{figure}
}
\newcommand{\multifigureexterior}[3]{\quadfigureexterior{#1}{#2}{#3}}
\newcommand{\subfigl}[4]{\subfigure[#3]{\includegraphics[width=#1]{#4} \label{#2}}}
\begin{document}
\begin{center}

{\bf{\LARGE{Robust Commitments and Partial Reputation}}}

\vspace*{.2in}

{\large{
\begin{tabular}{cc}
Vidya Muthukumar & Anant Sahai \\
\end{tabular}}}
\vspace*{.2in}

\begin{tabular}{c}
Department of Electrical Engineering and Computer Sciences, UC Berkeley \\
\{vidya.muthukumar,sahai\}@eecs.berkeley.edu
\end{tabular}

\vspace*{.2in}

\today

\end{center}
\vspace*{.2in}

\begin{abstract}
Agents rarely act in isolation -- their behavioral history, in particular, is public to others.
We seek a non-asymptotic understanding of how a leader agent should shape this history to its maximal advantage, knowing that follower agent(s) will be learning and responding to it.
We study Stackelberg leader-follower games with finite observations of the leader commitment, which commonly models security games and network routing in engineering, and persuasion mechanisms in economics.
First, we formally show that when the game is not zero-sum and the vanilla Stackelberg commitment is mixed, it is not robust to observational uncertainty.
We propose observation-robust, polynomial-time-computable commitment constructions for leader strategies that approximate the Stackelberg payoff, and also show that these commitment rules approximate the maximum obtainable payoff (which could in general be greater than the Stackelberg payoff).
\end{abstract}

\section{Introduction}

Consider a selfish, rational agent (designated as the ``leader") who is interacting non-cooperatively with another selfish, rational agent (designated as the ``follower").
If the agents are interacting \textit{simultaneously}, and know the game that they are playing, they will naturally play a Nash equilibrium(a) of the two-player game.
This is the traditionally studied solution concept in game theory.
Now, let's say that the leader has the ability to reveal its strategy in advance -- in the form of a \textit{mixed strategy commitment}, and the follower has the ability to observe this commitment and respond to it.
The optimal strategy for the leader now corresponds to the Stackelberg equilibrium of the two-player leader-follower game.
The Stackelberg solution concept enjoys application in several engineering settings where commitment is the natural framework for the leader, such as security~\cite{paruchuri2008playing}, network routing~\cite{roughgarden2004stackelberg} and law enforcement~\cite{muthukumar2017fundamental}.
The solution concept is interpreted in a broader sense as the ensuing strategy played by a patient leader that wishes to build a \textit{reputation} by playing against an infinite number of myopic followers~\cite{kreps1982reputation,milgrom1982predation,fudenberg1989reputation,fudenberg1992maintaining}.
Crucially, one can show rigorously that the leader will benefit significantly from commitment power, i.e. its ensuing Stackelberg equilibrium payoff is at least as much as the simultaneous equilibrium payoff~\cite{von2010leadership}.
Further, several \textit{mechanism design} problems that involve private information revelation - this includes signalling games~\cite{crawford1982strategic} and persuasion games~\cite{kamenica2011bayesian} -  can be thought of as Stackelberg games, and the optimal mechanism can be interpreted as the Stackelberg commitment\footnote{albeit sometimes with multiple followers.}.

But whichever interpretation one chooses, the Stackelberg solution concept assumes a very idealized setting (even over and above the assumptions of selfishness and infinite rationality) in which the mixed strategy commitment is \textit{exactly revealed} to the follower.
Further, the follower 100\% believes that the leader will actually stick to her commitment.
What happens when these assumptions are relaxed?
What if the leader could only demonstrate her commitment in a finite number of interactions -- how would she modify her commitment to maximize payoff, and how much commitment power would she continue to enjoy?
Is she even incentivized to help the follower estimate her commitment effectively?

What changes in a finite-interaction regime is that the follower only observes a part of the leader behavioral history and needs to \textit{learn about the leader's strategic behavior} -- to the extent that he is able to respond as optimally as possible.
By restricting our attention to finitely repeated play, we arrive at a problem setting that is fairly general: these follower agents in general will not know about the preferences of the leader agent.
When provided with historical context, we assume that they will use it rather than ignore it.
A broad umbrella of problems that has received significant attention in the machine learning literature is learning of strategic behavior from samples of play; for example, learning the agent's utility function through \textit{inverse reinforcement learning}~\cite{ziebart2008maximum}, learning the agent's level of rationality~\cite{waugh2013computational}, and inverse game theory~\cite{kuleshov2015inverse}.
While significant progress has been made in this goal of learning strategic behavior, attention has been restricted to the \textit{passive learning setting} in which the leading agent is unaware of the presence of the learner, or agnostic to the learner's utility.
In many situations, the agent himself will be invested in the outcome of the learning process.
In this paper, we put ourselves in the shoes of an agent who is shaping her historical context and is aware of the learner's presence as well as preferences, and study her choice of optimal strategy\footnote{It is worth mentioning the recent paradigm of \textit{cooperative inverse reinforcement learning}~\cite{hadfield2016cooperative} which studies the problem of agent investment in principal learning where the incentives are not completely aligned, but the setting is cooperative. In contrast, we focus on non-cooperative settings.}.
As we will see, the answer will depend on her utility function itself, as well as what kind of response she is able to elicit from the learner.

\subsection{Related work}

The Stackelberg solution concept is used in the engineering and economics literature to model a number of scenarios.
For one, the \textit{Stackelberg security game} is played between a defender, who places different utility levels on different targets to be protected and accordingly uses her resources to defend some subset of targets; and an attacker, who observes the defender strategy and wishes to attack some of the targets depending on whether he thinks they are left open as well as how much he values those targets.
Stackelberg games can also be modeled with a single leader and multiple followers, such as in computer network routing applications~\cite{roughgarden2004stackelberg}.
Many mechanism design problems involve computing an optimal mechanism to commit to, or an optimal way of \textit{revealing private information} - this includes auctions and, more recently, Bayesian persuasion mechanisms~\cite{kamenica2011bayesian}.

Economists have established an important link between the Stackelberg solution concept and the asymptotic limit of \textit{reputation building}.
Reputation effects were first observed in the \textit{chain-store paradox}, a firm-competition game where an \textit{incumbent} firm would often deviate from Nash equilibrium behavior and play its aggressive Stackelberg (pure, in this case) strategy~\cite{selten1978chain}. 
Theoretical justification was provided for this particular application~\cite{kreps1982reputation,milgrom1982predation} by modeling a $(N+1)$-player interaction between a leader and multiple followers, and studying the Nash equilibrium of an ensuing game as $N \to \infty$.
It was shown that the leader would play its \textit{pure Stackelberg strategy}\footnote{This is clearly more restrictive than the mixed strategy Stackelberg solution concept, and not necessarily advantageous over Nash, but it turns out to be so in the firm competition case.} in the Bayes-Nash equilibrium of this game endowed with a common prior on the leader's payoff structure\footnote{A more nuanced model considered a Bayesian prior over a leader being constrained to play its pure Stackelberg strategy as opposed to unconstrained play.}.
This model was generalized to such leader-follower ensembles for a general two-player game, and considering the possibility of mixed-strategy reputation, still retaining the asymptotic nature of results~\cite{fudenberg1989reputation,fudenberg1992maintaining}.
The ``first-player" advantage, and the entire Stackelberg solution concept, rely on an important assumption: \textit{that the commitment is perfectly revealed to the follower.}
This is usually not the case: in security games, the attacker will usually observe a finite number of deployments of the defender's resource, as opposed to the allocation strategy itself (which is often mixed).
In theoretical models for Bayesian persuasion, the persuader conveys a conditional distribution on her signal given the privately observed state of the world - but what will be practically observed is her history, and thus \textit{realizations} of the signal, not the distribution itself.
In all of these models, the leader establishes her reputation \textit{only partially}, and the manifestation of the revelation is itself random.
It is natural to ask how she should plan to optimally reveal her information under this constraint.

The idea of a robust solution concept in game theory is certainly not new.
The concept of \textit{trembling-hand-perfect-equilibrium}~\cite{selten1975reexamination} explicitly studies how robust mixed-strategy Nash equilibria are to slight perturbations in the mixtures themselves, and a similar concept was proposed for Stackelberg~\cite{van1997games}.
Another solution concept, \textit{quantal-response-equilibrium}~\cite{mckelvey1995quantal}, studies agents that are \textit{boundedly rational}, an orthogonal but important source of uncertainty in response.
In the Stackelberg setting, it was noted that robust commitments \textit{exist} that preserve the Stackelberg guarantee for small enough amounts of noise in the commitment; however, this is still an asymptotic perspective and does not directly help us answer our key computational questions: can we construct a robust commitment efficiently when the game is multi-dimensional, and does the leader want to use the noise to reveal or obfuscate her commitment?

The problem of computing the optimal commitment under \textit{finitely} limited observability corresponds to a robust optimization problem that is, in fact, NP-hard~\cite{an2012security,shieh2012protect}; so directly reasoning about the optimal commitment is not easy.
Whether there exists a polynomial-time approximation scheme for this problem was also unclear.
A duo of papers~\cite{an2012security,shieh2012protect} considered a model of full-fledged observational uncertainty with a Bayesian prior and posterior update based on samples of behavior, and proposed heuristic algorithmic techniques to compute the optimum.
In fact, they also factored for quantal responses using a bounded rationality model~\cite{mckelvey1995quantal}.
This work showed through simulations that there could be a positive return over and above Stackelberg payoff.
In one important piece of analytical work, the problem was also considered for the special case of zero-sum games~\cite{blum2014lazy}, and it was shown that the Stackelberg commitment itself approximated the optimal payoff.
In this result, the extent of approximation actually depends on the amount of observational uncertainty itself - the results we prove for all non-zero-sum games have a similar flavor.

The problem of \textit{communication constraints} in the commitment has also received a lot of interest in the recent algorithmic persuasion literature, but with quantitatively different models for the uncertainty.
Communication constraints on signaling in bilateral trading games~\cite{dughmi2016persuasion} and auction design~\cite{dughmi2014constrained} have been studied from a \textit{compression perspective}, where the leading agent can design the observation channel - while in our model the observation channel is even more constrained to be a finite number of \textit{random} realizations of the mixed commitment.
Further, in many of these settings, the principal is naturally incentivized to reveal the private information and the problem primarily becomes about the communication complexity, and whether the social welfare of the optimal mechanism can be approximated\footnote{These are clearly interesting algorithmic questions in themselves, especially in the case of multiple receivers and private vs public signaling~\cite{dughmi2016algorithmic}, but do not directly address the questions we have raised.}.
In security games, the possibility of the mixed commitment being either fully observed or not observed at all has been considered~\cite{korzhyk2011solving}; as well as different ways of handling the uncertainty, eg: showing that for some security games the Stackelberg and Nash equilibria coincide and observation of commitment does not matter~\cite{korzhyk2011stackelberg}.
Pita et al~\cite{pita2010robust} first proposed a model for the defender (leader) to account for attacker (follower) observational uncertainty by allowing the follower to anchor~\cite{rottenstreich1997unpacking} to a certain extent on its uniform prior.
While they showed significant returns from using their model through extensive experiments, they largely circumvented the algorithmic and analytical challenges by not explicitly considering random samples of defender behavior, thus keeping the attacker response deterministic but shifted.
Our work limits observation in the most natural way for the applications that we consider (i.e. number of samples of leader behavior), and \textit{because the manifestation of the uncertainty is itself random}, our results have distinct and new implications.

\subsection{Our contributions}

Our main contribution is to understand the extent of reputational advantage when interaction is finite, and prescribe approximately optimal commitment rules for this regime.

We study Stackelberg leader-follower games in which a follower obtains a limited number of observations of the leader commitment.
We first prove that in most non-zero-sum games the payoff of the Stackelberg commitment is not robust to even an \textit{infinitesmal amount of} observational uncertainty.
Therefore, the Stackelberg commitment is suboptimal in its payoff\footnote{This property has actually been proved for special examples of Stackelberg games~\cite{van1997games}, but it was unclear whether it holds for all or most games.}.
Next, we propose robust commitment rules for leaders and show that we can approach the Stackelberg payoff as the number of observations increases.
The robust commitment construction involves optimizing a tradeoff between preserving the follower best response and staying close to the ideal Stackelberg commitment, by moving the commitment a little bit into the interior of an appropriate convex polytope~\cite{conitzer2006computing}.
The analysis of payoff of the commitment construction is inspired by interior point convex geometry~\cite{kannan2012random}.
Finally, we show that a possible advantage for the leader from limited observability is only related to follower response mismatch, and show that this advantage is limited. 
Computationally speaking, the corollary is that we are able to approximate the optimal payoff through a simple construction which can be obtained in constant time from computation of the Stackelberg commitment (itself a polynomial-time operation~\cite{conitzer2006computing}).
Philosophically, this result implies that a leader can gain to a very limited extent by misrepresenting her commitment and eliciting a suboptimal response from the follower.
We corroborate our theoretical results with simulations on illustrative examples and random ensembles of security games.

\section{Problem statement}

\subsection{Preliminaries}

We represent a two-player leader-follower game in normal form by the pair of $d \times n$ matrices $(T_1,T_2)$, where $T_1 \in \reals^{d \times n}$ denotes the leader payoff matrix and $T_2 \in \reals^{d \times n}$ denotes the follower payoff matrix.
We denote the leader mixed strategy space by $\Delta_d$ (where $\Delta_k$ for any $k$ represents the $k$-dimensional probability simplex) and the follower mixed strategy space by $\Delta_n$.
From now on, we define an \textit{effective dimension} of a game as a number $m < d$ for which the effective payoff matrices of leader and follower respectively are $A = \begin{bmatrix}
\bolda_1 & \bolda_2 & \ldots & \bolda_n \end{bmatrix} \in \reals^{m \times n}$, $B = \begin{bmatrix} \boldb_1 & \boldb_2 & \ldots & \boldb_n \end{bmatrix} \in \reals^{m \times n}$, and the effective set of leader strategies is given by a \textit{convex polytope} $K \subseteq \Delta_m$\footnote{This definition is important in the context of Stackelberg security games, for which the leader strategy space looks exponential in the number of targets $m$ - but the actual manifestation of all leader strategies is in fact $m$-dimensional. 
In particular, a defender strategy manifests as a distribution over different targets being covered.
}.

We consider a setting of \textit{asymmetric private information} in which the leader knows about the follower preferences (i.e. she knows the matrix $B$) while the follower does not know about the leader preferences (i.e. he possesses no knowledge of the matrix $A$)\footnote{This is an important assumption for the paper, and is in fact used in traditional reputation building frameworks. In future, we will want to better understand situations of repeated interaction where the $\infty$-level leader and $1$-level follower are both learning about one another.}.

With infinite experience, the well-established effect from the follower's point of view is that the leader has established \textit{commitment}, or developed a \textit{reputation}, for playing according to some mixed strategy $\xvec \in K$.
We denote the follower's set of theoretically best \textit{pure-strategy} responses to a mixed strategy commitment $\xvec$ by $\Kspace^*(\xvec) \subseteq [m]$.
Explicitly, we have
\begin{align*}
\Kspace^*(\xvec) := {\arg \max}_{j \in [n]} \inprod{\xvec}{\boldb_j} .
\end{align*}
An important assumption that we make (and that has been made in the classical literature~\cite{conitzer2006computing}) is the follower actually responds with the pure strategy in the set $\Kspace^*(\xvec)$ that is most beneficial to the leader\footnote{The technical reason for this tie-breaking rule is to be able to explicitly define the Stackelberg commitment as an explicit \textit{maximum} - this in itself gives a subtle clue of its fragility.
}.
That is, the follower responds with pure strategy
\begin{align*}
\kstar(\xvec) := {\arg \max}_{j \in \Kspace^*(\xvec)} \inprod{\xvec}{\bolda_j} .
\end{align*}
Then, we also define \textit{best-response regions} as the set of leader commitments that would elicit the pure strategy response $j$ from the follower, i.e. $\Rspace_j := \{\xvec \in K: \kstar(\xvec) = j\}$.

With these definitions, we can define the leader's ideal payoff to be expected with an infinite reputation:
\begin{definition}
A leader with an infinite reputation of playing according to the strategy $\xvec$ should expect payoff
\begin{align*}
f_{\infty}(\xvec) := \inprod{\xvec}{\bolda_{k^*}} .
\end{align*}

Therefore, the leader's \textbf{Stackelberg payoff} is the solution to the program
\begin{align*}
f^*_{\infty} := \max_{\xvec \in \Delta_m} f_{\infty}(\xvec) .
\end{align*}

The argmax of this program is denoted as the \textbf{Stackelberg commitment} $\xvec^*_{\infty}$.
Further, we denote the best response faced in Stackelberg equilibrium by $\kstar := \kstar(\xstar_{\infty})$.
\end{definition}

It is clear that the Stackelberg commitment is optimal for the leader under two conditions: \textit{the leader is 100\% known to be committed to a fixed strategy}, and \textit{the follower knows exactly the leader's committed-to strategy}.
For a finite number of interactions, neither is true.

\subsection{Observational uncertainty with established commitment}\label{sec:obsmodel}

Even assuming that there is a shared belief in commitment, there is uncertainty.
In particular, with a finite number of plays, the follower does not know the exact strategy that the leader has committed to, and only has an estimate.

Consider the situation where a leader can only reveal its commitment $\xvec$ through $N$ \textit{pure strategy plays} $I_1,I_2,\ldots,I_N \text{ i.i.d } \sim \xvec$.
The commitment is known (to both leader and followers) to come from a set of mixed strategies $\Xspace \subseteq K$.
We denote the maximum likelihood estimate of the leader's mixed strategy, as seen by the follower, by $\Xhat_N$.
It is reasonable to expect, under certainty of commitment, that a ``rational"\footnote{Rational is in quotes because the follower is not necessarily using expected-utility theory (although there is an expected-utility-maximization interpetation to this estimate if the mixed strategy were uniform drawn from $\Xspace$).} follower would best-respond to $\Xhat_N$, i.e. play the pure strategy
\begin{align}\label{eq:fNbestresponse}
\kstar(\Xhat_N) .
\end{align}
We can express the expected leader payoff under this learning rule.
\begin{definition}
A leader which will have $N$ plays according to the hidden strategy $\xvec$ can expect payoff in the $N$th play of
\begin{align*}
f_N(\xvec) := \EE\left[\inprod{\xvec}{\bolda_{\kstar(\Xhat_N)}}\right] 
\end{align*}

against a follower that plays according to (\ref{eq:fNbestresponse}).
The maximal payoff a leader can expect is 
\begin{align*}
\fstar_N := \max_{\xvec \in \Delta_m} f_N(\xvec)
\end{align*}

and it acquires this payoff by playing the argmax strategy $\xvec^*_N$.
\end{definition}

Ideally, we want to understand how close $\fstar_N$ is to $\fstar_{\infty}$, and also how close $\xstar_N$ is to $\xstar_{\infty}$.
An answer to the former question would tell us how observational uncertainty impacts the first-player advantage.
An answer to the latter question would shed light on whether the best course of action deviates significantly from Stackelberg commitment.
We are also interested in algorithmic techniques for \textit{approximately computing} the quantity $\fstar_N$, as doing so exactly would involve solving a non-convex optimization problem.

\section{Main Results}

\subsection{Robustness of Stackelberg commitment to observational uncertainty}

A natural first question is whether the Stackelberg commitment, which is clearly optimal if the game were being played infinitely (or equivalently, if the leader had infinite commitment power and exact public commitment), is also suitable for finite play.
In particular, we might be interested in evaluating whether we can do better than the baseline Stackelberg performance $\fstar$.
We show through a few paradigmatic examples that the answer can vary.

\multifigureexterior{fig:games}{Illustration of examples of zero-sum game and non-zero-sum games in the form of normal form tables and ideal leader payoff function $f_{\infty}(.)$. $p$ denotes the probability that the leader will play strategy $1$, and fully describes leader mixed commitment for these $2 \times n$ games.}{
\subfigl{0.33\textwidth}{fig:zerosum1}{$2 \times 3$ zero-sum game.}{ZSgame}  
\subfigl{0.33\textwidth}{fig:nonzerosumbad1}{$2 \times 2$ non-zero-sum game.}{NZSgame}
\subfigl{0.33\textwidth}{fig:nonzerosum1}{$2 \times 3$ non-zero-sum game.}{NZSgame2}
}

\begin{example}\label{eg:zerosum}

\quickfigure{fig:zerosum2}{Semilog plot of extent of advantage over Stackelberg payoff as a function of $N$ in the $2 \times 3$ zero-sum game depicted in Figure~\ref{fig:zerosum1}.}{fig1b}{0.33\textwidth}


We consider a $2 \times 3$ \textit{zero-sum game}, represented in normal form in Figure~\ref{fig:zerosum1}, in which we can express the leader strategy according to the probability $p$ with which she will pick strategy $1$, and leader payoff is as follows:
\begin{align*}
\begin{cases}
f(p;1) := p \text{ if follower best responds with strategy 1 } \\
f(p;2) := 1 - p \text{ if follower best responds with strategy 2 } \\
f(p;3) := 3 - 4p \text{ if follower best responds with strategy 3 } \\
\end{cases}
\end{align*}

Since the game is zero-sum, the follower responds in a way that is worst-case for the leader.
This means that we can express the leader payoff as
\begin{align*}
f_{\infty}(p) = \min\{f(p;1),f(p;2),f(p;3)\} .
\end{align*} 

This leader payoff structure is depicted in Figure~\ref{fig:zerosum1}.
Therefore, we can express the Stackelberg payoff as 
\begin{align*}
\fstar_{\infty} = \max_{p \in [0,1]} f_{\infty}(p) = 1/2,
\end{align*} 
attained at $\pstar_{\infty} = 1/2$.
We wish to evaluate $f_N(\pstar_{\infty})$.
It was noted~\cite{blum2014lazy} that $f_N(\pstar_{\infty}) \geq \fstar_{\infty}(\pstar_{\infty})$ by the minimax theorem, but not always clear whether strict inequality would hold (that is, if observational uncertainty gives a strict advantage).
For this example, we can actually get a sizeable improvement! 
To see this, look at the simple example of $N = 1$.
Denoting $\Phat = \frac{1}{N}\sum_{j=1}^N \Ind[I_j = 1]$, we have
\begin{align*}
f_1(1/2) &= \Pr[\Phat_1 = 0].f(1/2;1) + \Pr[\Phat_1 = 1].f(1/2;3) \\
&= 1/2 \times 1/2 + 1/2 \times 1 = 3/4 . \\
\end{align*}
The semilog plot in Figure~\ref{fig:zerosum2} shows that this improvement persists for larger values of $N$, although the extent of improvement decreases \textit{exponentially} with $N$.
We can show that
\begin{align*}
f_N(1/2) - f^*_{\infty} &= 1/2 \Pr[\Phat_N > 2/3] \\
&= 1/2 \Pr[\Phat_N - 1/2 > 1/6 ] \\
&\leq \exp\{-N\kull{2/3}{1/2}\}
\end{align*}

where $\kull{.}{.}$ denotes the Kullback-Leibler divergence, and the last inequality is due to Sanov's theorem~\cite{csiszar2011information}.

This shows analytically that the advantage does indeed decrease exponentially with $N$.
Naturally, this is because we see a decrease in the stochasticity that elicits the more favorable follower response with action $3$ with the number of observations $N$.
\end{example}

Example~\ref{eg:zerosum} showed us how Stackelberg commitment power could be increased by stochastically eliciting more favorable responses.
We now see an example illustrating that the commitment power can disappear completely.

\begin{example}\label{eg:nonzerosumbad}

\multifigureexterior{fig:nonzerosumbad}{Example of the $2 \times 2$ non-zero-sum game depicted in Figure~\ref{fig:nonzerosumbad1}, for which observational uncertainty is always undesirable.}{
\subfigl{0.33\textwidth}{fig:nonzerosumbad2}{Plot depicting the performance of the sequence of robust commitments $\{\xvec_N\}_{N \geq 1}$ and the Stackelberg commitment $\xstar_{\infty}$ as a function of $N$. The benchmark for comparison is idealized Stackelberg payoff $\fstar_{\infty}$.}{fig2b}
\subfigl{0.33\textwidth}{fig:nonzerosumbad3}{Plot showing the performance of sequence of robust commitments $\{\xvec_N\}_{N \geq 1}$ as compared to the optimum performance $\fstar_N$ (brute-forced).}{fig2c}}

We consider a $2 \times 2$ \textit{non-zero-sum game}, represented in normal form and leader payoff structure in Figure~\ref{fig:nonzerosumbad1}.
Explicitly, the ideal leader payoff function is
\begin{align*}
f_{\infty}(p) = \begin{cases}
p \text{ if } p \leq 1/2 \\
-p \text{ if } p > 1/2 .
\end{cases}
\end{align*}

This is essentially the example reproduced in~\cite{blum2014lazy}, which we repeat for storytelling value.
Notice that $\fstar_{\infty} = 1/2, \pstar_{\infty} = 1/2$, but the advantage evaporates with \textit{observational uncertainty}.
For any finite $N$, we have
\begin{align*}
f_N(1/2) &= \Pr[\Phat_N \leq 1/2](1/2) + \Pr[\Phat_N > 1/2](-1/2) \\
&= 1/2 \times 1/2 - 1/2 \times 1/2 = 0 .
\end{align*}

Remarkably, this implies that $\fstar_{\infty} - f_N(\pstar_{\infty}) = 1/2$ and so $\lim_{N \to \infty} \fstar_{\infty} - f_N(\pstar_{\infty}) \neq 0$!
This is clearly a very negative result for the robustness of Stackelberg commitment, and as a very pragmatic matter tells us that the idealized Stackelberg commitment $\pstar_{\infty}$ is far from ideal in finite-observation settings.
This example shows us a case where stochasticity in follower response is \textit{not desired}, principally because of the discontinuity in the leader payoff function at $\pstar_{\infty}$.
\end{example}

Example~\ref{eg:nonzerosumbad} displayed to the fullest the significant disadvantage of observational uncertainty.
The game considered was special in that there was no potential for limited-observation gain, while in the game presented in Example~\ref{eg:zerosum} there was only potential for limited-observational gain. 
What could happen in general? 
Our next and final example provides an illustration.

\begin{example}\label{eg:nonzerosum}

\multifigureexterior{fig:nonzerosum}{Example of the $2 \times 3$ non-zero-sum game depicted in Figure~\ref{fig:nonzerosum1}, in which observational uncertainty could either help or hurt the leader.}{
\subfigl{0.33\textwidth}{fig:nonzerosum2}{Plot of the extent of (dis)advantage over Stackelberg payoff as a function of $N$.}{fig3b}
\subfigl{0.33\textwidth}{fig:nonzerosum3}{Plot depicting the performance of the sequence of robust commitments $\{\xvec_N\}_{N \geq 1}$ and the Stackelberg commitment $\xstar_{\infty}$ as a function of $N$. The benchmark for comparison is idealized Stackelberg payoff $\fstar_{\infty}$.}{fig3c}
\subfigl{0.33\textwidth}{fig:nonzerosum4}{Plot showing the performance of sequence of robust commitments $\{\xvec_N\}_{N \geq 1}$ as compared to the optimum performance $\fstar_N$ (brute-forced).}{fig3d}}

Our final example considers a $2 \times 3$ non-zero-sum game, whose normal form and leader payoff structure are depicted in Figure~\ref{fig:nonzerosum1}.
The ideal leader payoff function is 
\begin{align*}
f_{\infty}(p) = \begin{cases}
p \text{ if } p \leq 1/2 \\
1/2 - p \text { if } 1/2 < p \leq 5/7 \\
3 - 4p \text{ if } p > 5/7 .
\end{cases}
\end{align*}

As in the other examples, $\fstar_{\infty} = 1/2, \pstar_{\infty} = 1/2$.
Notice that this example captures both positive and negative effects of stochasticity in response.
On one hand, follower response $2$ is highly undesirable (a la Example~\ref{eg:nonzerosumbad}) but follower response $3$ is highly desirable (a la Example~\ref{eg:zerosum}).
What is the net effect?
We have
\begin{align*}
f_N(1/2) &= \Pr[\Phat_N \leq 1/2](1/2) + \Pr[1/2 < \Phat_N < 5/7](0) + \Pr[\Phat_N \geq 5/7](1) \\
&= (1/2)(1/2) + \Pr[\Phat_N \geq 5/7](1) \\
&\leq 1/4 + 1/2\exp\{-N\kull{5/7}{1/2}\} .
\end{align*}

A quick calculation thus tells us that $f_N(\pstar_{\infty}) <= \fstar_{\infty}$ if $N \geq 8$, showing that Stackelberg in fact has poor robustness for this example.
Intuitively, the probability of the ``bad" stochastic event remains constant while the probability of the ``good" stochastic event decreases exponentially with $N$.
Even more damningly, we see that $\lim_{N \to \infty} \fstar_{\infty} - f_N(\pstar_{\infty}) \geq \lim_{N \to \infty} 1/4 - 1/2\exp\{-N\kull{5/7}{1/2}  = 1/4$, again showing that the Stackelberg commitment is far from ideal.
We can see the dramatic decay of leader advantage over and above Stackelberg, and ensuing disadvantage even for a very small number of observations, in Figure~\ref{fig:nonzerosumbad2}.
\end{example}

While the three examples detailed above provide differing conclusions, there are some common threads.
For one, in all the examples it is the case that committing to the Stackelberg mixture $\xstar_{\infty}$ can result in the follower being agnostic between more than one response.
Only one of these responses, the pure strategy $\kstar$, is desirable for the leader.
A very slight misperception in the estimation of the true value $\xstar_{\infty}$ can therefore lead to a different, worse-than-expected response and this misperception happens with a sizeable, non-vanishing probability.
On the flipside, a different response could also lead to better-than-expected payoff, raising the potential for a gain over and above $\fstar$.
However, these \textit{better-than-expected responses} cannot share a boundary with the Stackelberg commitment, and we will see that the probability of eliciting them decreases exponentially with $N$.
The net effect is that the Stackelberg commitment is, most often, not robust -- and critically, this is even the case for small amounts of uncertainty.

Our first result is a formal statement of instability of Stackelberg commitments for a general $2 \times n$ game.
We denote the leader probability of playing strategy $1$ by $p \in [0,1]$, and the Stackelberg commitment's probability of playing strategy $1$ by $\pstar_{\infty}$.

Furthermore, let $\phi(t)$ denote the CDF of the standard normal distribution $\NORMAL(0,1)$.
We are now ready to state the result.

\begin{theorem}\label{thm:stackelbergnotrobust}
For any $2 \times n$ leader-follower game in which $\pstar_{\infty} \in (0,1)$ and $f_{\infty}(p)$ discontinuous at $p = \pstar_{\infty}$, we have
\begin{align}\label{eq:stackelbergnotrobust}
f_N(\pstar_{\infty}) \leq \fstar_{\infty} - C\left(\phi(\sqrt{N}C') - \frac{1}{2}- \frac{C'}{\sqrt{N}}\right) + \exp\{-NC''^2\}
\end{align}

where $C,C',C''$ are strictly positive constants depending on the parameters of the game.
This directly implies the following:
\begin{enumerate}
\item For some $N_0 > 0$, we have $f_N(\pstar_{\infty}) < \fstar_{\infty}$ for all $N > N_0$.
\item We have $\lim_{N \to \infty} f_N(\pstar_{\infty}) < \fstar_{\infty}$.
\end{enumerate}
\end{theorem}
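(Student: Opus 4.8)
The plan is to convert the discontinuity of $f_\infty$ at $\pstar_\infty$ into a quantitative loss by observing that the follower's empirical estimate $\Phat_N$ of the commitment straddles $\pstar_\infty$, so that with probability bounded away from zero it is pushed just across the discontinuity into a best-response region that is strictly worse for the leader, while the probability that $\Phat_N$ wanders far enough to elicit a \emph{better}-than-Stackelberg response decays exponentially in $N$.

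First I would pin down the local geometry of $f_\infty$. Parametrising leader strategies by $p$, each $f(p;j):=\inprod{p}{\bolda_j}$ is affine in $p$, the best-response regions $\Rspace_j$ are finitely many subintervals of $[0,1]$, and $f_\infty(p)=f(p;\kstar(p))$ is piecewise affine with finitely many pieces. Discontinuity at $\pstar_\infty$ forces $\pstar_\infty$ to be a boundary point of its own region, and since $\fstar_\infty=f_\infty(\pstar_\infty)$ is the global maximum, both one-sided limits of $f_\infty$ there are $\le\fstar_\infty$ and at least one is strictly smaller (equality on both sides would make $f_\infty$ continuous at $\pstar_\infty$). After possibly relabelling the two sides, assume there are $\delta>0$ and $k'\neq\kstar$ with $\kstar(p)=k'$ for all $p\in(\pstar_\infty,\pstar_\infty+\delta)$ and $\Delta:=\fstar_\infty-f(\pstar_\infty;k')>0$. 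Next, let $\mathcal G:=\{\,j:\ f(\pstar_\infty;j)>\fstar_\infty\,\}$ be the responses that would over-reward the leader at the commitment $\pstar_\infty$; if some $\Rspace_j$ with $j\in\mathcal G$ touched $\pstar_\infty$ then $f_\infty$ would equal $f(\cdot;j)>\fstar_\infty$ on a one-sided neighbourhood of $\pstar_\infty$, contradicting $f_\infty\le\fstar_\infty$. Hence $\bigcup_{j\in\mathcal G}\Rspace_j$ lies at distance at least some $\eta>0$ from $\pstar_\infty$, and I shrink $\delta$ so that $\delta<\eta$.

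The probabilistic input is that $N\Phat_N\sim\mathrm{Bin}(N,\pstar_\infty)$ with variance $\sigma^2:=\pstar_\infty(1-\pstar_\infty)>0$. A Chernoff bound on the binomial tails gives $\Prob[\,|\Phat_N-\pstar_\infty|\ge\eta\,]\le\exp\{-NC''^2\}$ for a suitable $C''>0$ depending only on $\eta$ and $\pstar_\infty$. The Berry--Esseen theorem gives $\Prob[\Phat_N>\pstar_\infty]\ge\tfrac12-\tfrac{c_0}{\sqrt N}$ and $\Prob[\Phat_N<\pstar_\infty+\delta]\ge\phi(\sqrt N\,\delta/\sigma)-\tfrac{c_0}{\sqrt N}$ (the $O(1/N)$ lattice corrections are absorbed into $c_0$), so by inclusion--exclusion $\Prob[\pstar_\infty<\Phat_N<\pstar_\infty+\delta]\ \ge\ \phi(\sqrt N\,\delta/\sigma)-\tfrac12-\tfrac{2c_0}{\sqrt N}$. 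Now decompose $f_N(\pstar_\infty)=\EE[f(\pstar_\infty;\kstar(\Phat_N))]$ over the three disjoint events $\{\pstar_\infty<\Phat_N<\pstar_\infty+\delta\}$, on which the payoff is exactly $\fstar_\infty-\Delta$; $\{\Phat_N\in\bigcup_{j\in\mathcal G}\Rspace_j\}$, on which it is at most the largest leader entry $M$; and the remainder, on which the elicited response lies outside $\mathcal G$ and the payoff is at most $\fstar_\infty$. This yields
\[
f_N(\pstar_\infty)\ \le\ \fstar_\infty-\Delta\,\Prob[\pstar_\infty<\Phat_N<\pstar_\infty+\delta]+(M-\fstar_\infty)\,\Prob[\,|\Phat_N-\pstar_\infty|\ge\eta\,],
\]
and substituting the two estimates and setting $C=\Delta$, $C'$ a small enough positive constant (so that simultaneously $C'\le\delta/\sigma$ and the Berry--Esseen remainder is dominated by $C'/\sqrt N$), and $C''$ rescaled to swallow the factor $M-\fstar_\infty$, gives exactly~\eqref{eq:stackelbergnotrobust}. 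Conclusions 1 and 2 are then immediate: as $N\to\infty$ the subtracted term tends to $C/2>0$ while $\exp\{-NC''^2\}\to0$, so the right-hand side is strictly below $\fstar_\infty$ for all large $N$, and $\limsup_N f_N(\pstar_\infty)\le\fstar_\infty-C/2<\fstar_\infty$.

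The conceptual heart is the geometry step: that discontinuity at the Stackelberg point forces both a strictly-worse neighbouring region \emph{and} a fixed separation between $\pstar_\infty$ and every over-rewarding region; once this is in place the rest is a routine concentration estimate. I expect the only real friction to be the non-asymptotic probability bookkeeping --- one needs a quantitative CLT (Berry--Esseen) rather than a bare CLT to produce the $\phi(\sqrt N\,\cdot)$ and $O(1/\sqrt N)$ terms honestly, one must be mildly careful that $\Phat_N$ is lattice-valued, and one has to check that the single constants $C,C',C''$ can be chosen to play all of their roles at once --- which is somewhat delicate but not deep.
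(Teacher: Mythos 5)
Your proposal is correct and follows essentially the same route as the paper's proof: identify the strictly-worse best-response region adjacent to $\pstar_{\infty}$ forced by the discontinuity, lower-bound the probability that $\Phat_N$ falls into it via Berry--Esseen (yielding the non-vanishing $\phi(\sqrt{N}C')-\tfrac12-\tfrac{C'}{\sqrt N}$ loss), and kill the far-away, potentially over-rewarding regions with a Hoeffding/Chernoff tail. The only cosmetic difference is your bookkeeping of the third event (isolating the over-rewarding set $\mathcal{G}$ and bounding the remainder by $\fstar_{\infty}$, versus the paper's partition into $\Rspace_{\kstar}$, $\Rspace_{j_{\mathsf{alt}}}$, and everything else bounded by $f_{max}$), which does not change the argument.
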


The proof of Theorem~\ref{thm:stackelbergnotrobust} is contained in Section~\ref{sec:prop1proof}.
The technical  ingredients in the proof are the Berry-Esseen theorem~\cite{berry1941accuracy,esseen1942liapounoff}, used to show that the detrimental alternate responses on the Stackelberg boundary are non-vanishingly likely -- and the Hoeffding bound, used to tail bound the probability of potentially beneficial alternate responses not on the boundary\footnote{It is worth noting that a similar argument as presented here could be extended to a general $m \times n$ game, using iid random vectors instead of random variables and considering a demarcation into best-response regions as illustrated in Figure~\ref{fig:prop1illustration}. We restrict attention to the $2 \times n$ case for ease of exposition.}

For non-robustness of Stackelberg commitment to hold, the two critical conditions for the game are that there is a discontinuity at the Stackelberg boundary, and that the Stackelberg commitment is mixed. For a zero-sum game, the first condition does not hold and the Stackelberg commitment stays robust as we saw in Example~\ref{eg:zerosum}.

The theorem directly implies that the ideal Stackelberg payoff is only obtained for the exact case of $N = \infty$ (when the commitment is perfectly observed), and that for any value of $N < \infty$ there is a \textit{non-vanishing reduction} in payoff.
In the simulations in Section~\ref{sec:simulations}, we will see that this gap is empirically significant.

\subsection{Robust commitments achieving close-to-Stackelberg performance}

The surprising message of Theorem~\ref{thm:stackelbergnotrobust} is that, in general, the Stackelberg commitment $\xstar$ is undesirable.
The commitment $\xstar$ is pushed to the \textit{extreme point} of the best-response-region $\Rspace_{\kstar}$ to ensure optimality under idealized conditions; and this is precisely what makes it sub-optimal under uncertainty.
What if we could move our commitment a little bit into the interior of the region $\Rspace_{\kstar}$ instead, such that we can get a \textit{high-probability-guarantee} on eliciting the expected response, while staying sufficiently close to the idealized optimum?
Our next result quantifies the ensuing tradeoff and shows that we can cleverly construct the commitment to approximate Stackelberg performance.

\begin{theorem}\label{thm:mtimesnobs}
Let the best-response polytope $\Rspace_{\kstar}$ be non-empty in $\reals^{m-1}$.
Then, provided that the number of samples $N = \widetilde{\Oh}(m)$, we can construct commitment $\xvec_{N,p}$ for every $0 < p < 1/2$ such that 
\begin{align}
\fstar_{\infty} - f_N(\xvec_N) &= \widetilde{\Oh}\Big( \left(\frac{m}{N}\right)^p + e^{-\omega(1) \cdot N^{1 - 2p}}  \Big) .
\end{align}

Furthermore, these constructions are computable in $\Oh(1)$ time with knowledge of the Stackelberg commitment $\xstar_{\infty}$.
(The $\widetilde{\Oh}(\cdot)$ contains constant factors that depend on both the local and global geometry of the best-response-region $\Rspace_{\kstar}$. For a fully formal statement that includes these factors, see Lemma~\ref{lem:mainthmformal}.)
\end{theorem}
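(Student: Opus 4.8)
The plan is to move the Stackelberg commitment a small, carefully tuned distance into the relative interior of the best-response region $\Rspace_{\kstar}$, so that the follower's estimate $\Xhat_N$ still lands in $\Rspace_{\kstar}$ except on an event of exponentially small probability, while the ideal payoff degrades only linearly in the displacement. Concretely, since $\Rspace_{\kstar}$ is a non-empty polytope that is full-dimensional in the affine hull of $\Delta_m$, fix once a deep interior point $\widehat{\xvec}\in\Rspace_{\kstar}$ --- e.g. its Chebyshev center --- so that a ball $B(\widehat{\xvec},\rho)$ of some radius $\rho>0$ (depending on the geometry of $\Rspace_{\kstar}$) lies inside $\Rspace_{\kstar}$, and for a mixing weight $\lambda=\lambda_{N,p}\in(0,1)$ to be chosen set
\begin{align*}
\xvec_{N,p}:=(1-\lambda)\,\xstar_{\infty}+\lambda\,\widehat{\xvec}.
\end{align*}
Two elementary observations do the work: by convexity of $\Rspace_{\kstar}$, $B(\xvec_{N,p},\lambda\rho)=(1-\lambda)\{\xstar_{\infty}\}+\lambda B(\widehat{\xvec},\rho)\subseteq\Rspace_{\kstar}$, i.e. $\xvec_{N,p}$ sits at distance $\ge\lambda\rho$ from the boundary; and $\vecnorm{\xvec_{N,p}-\xstar_{\infty}}{2}=\lambda\vecnorm{\widehat{\xvec}-\xstar_{\infty}}{2}\le\lambda D$ with $D$ the diameter of $\Rspace_{\kstar}$.

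Next I decompose the payoff over the ``correct response'' event $G_N:=\{\Xhat_N\in\Rspace_{\kstar}\}$. On $G_N$ the follower plays $\kstar(\Xhat_N)=\kstar$, since by definition every point of $\Rspace_{\kstar}$ elicits $\kstar$; as $f_{\infty}$ coincides with the linear functional $\inprod{\cdot}{\bolda_{\kstar}}$ on $\Rspace_{\kstar}$ and $\fstar_{\infty}=\inprod{\xstar_{\infty}}{\bolda_{\kstar}}$, the conditional payoff is $f_{\infty}(\xvec_{N,p})\ge\fstar_{\infty}-\vecnorm{\bolda_{\kstar}}{2}\,\vecnorm{\xvec_{N,p}-\xstar_{\infty}}{2}\ge\fstar_{\infty}-\vecnorm{\bolda_{\kstar}}{2}D\lambda$. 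On $G_N^c$ I only use the crude bound $\inprod{\xvec_{N,p}}{\bolda_{\kstar(\Xhat_N)}}\ge-\max_{i,j}|A_{ij}|$. Hence, with $R$ a constant depending on $A$ and $\fstar_{\infty}$,
\begin{align*}
\fstar_{\infty}-f_N(\xvec_{N,p})\;\le\;\vecnorm{\bolda_{\kstar}}{2}\,D\,\lambda\;+\;R\cdot\Prob[G_N^c].
\end{align*}
Since $\Xhat_N$ is the empirical frequency vector of $N$ i.i.d. plays of $\xvec_{N,p}$, each coordinate is a binomial average, so Hoeffding together with a union bound over the $m$ coordinates and $\vecnorm{\cdot}{2}\le\sqrt{m}\,\vecnorm{\cdot}{\infty}$ give $\Prob[G_N^c]\le\Prob[\,\vecnorm{\Xhat_N-\xvec_{N,p}}{2}>\lambda\rho\,]\le 2m\exp(-2N\lambda^2\rho^2/m)$.

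It then remains to optimize the displacement: taking $\lambda=(m/N)^p$ (below $1$ precisely in the stated regime $N=\widetilde{\Oh}(m)$, i.e. $N$ at least of order $m$ up to logarithmic factors) makes the first term $\widetilde{\Oh}((m/N)^p)$ and turns the exponent into $2\rho^2(N/m)^{1-2p}$, so the second term is $\widetilde{\Oh}(e^{-\omega(1)\,N^{1-2p}})$ once $\rho$ and the $m$-dependent prefactor are absorbed into the $\widetilde{\Oh}(\cdot)$/$\omega(1)$; summing gives the claimed bound. For the complexity claim, $\xstar_{\infty}$ and $\widehat{\xvec}$ are each the output of one linear program (the Stackelberg LP of~\cite{conitzer2006computing} and a Chebyshev-center LP on $\Rspace_{\kstar}$) solved once, after which producing $\xvec_{N,p}$ for any $(N,p)$ is a single convex combination, hence $\Oh(1)$. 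The main obstacle is the honest bookkeeping of the geometric constants rather than any of these steps: the inradius $\rho$ available at $\widehat{\xvec}$ and the diameter $D$ can be badly behaved when $\Rspace_{\kstar}$ is thin or elongated, and these local/global quantities are exactly what the $\widetilde{\Oh}(\cdot)$ suppresses and what Lemma~\ref{lem:mainthmformal} must spell out; squeezing the sharpest constants --- and, if desired, replacing the fixed Chebyshev center by a more favorable interior point --- is where interior-point convex geometry enters. A minor technicality to handle carefully is that $\Rspace_{\kstar}$ lives in the $(m-1)$-dimensional affine hull of $\Delta_m$ and the coordinates of $\Xhat_N$ are dependent, so the ball inclusion and the concentration step must be stated in the appropriate coordinates; these are routine.
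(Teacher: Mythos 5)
Your construction is correct and reaches the stated bound, but it takes a genuinely different route from the paper. The paper never picks a deep interior point: it applies the affine change of variables that normalizes the constraints active at the Stackelberg vertex to $B'\yvec'\preceq\mathbf{1}$, defines the $\delta$-deviation commitment by scaling $(1-\delta)(\ystar)'_{\infty}$, and uses the \emph{Dikin ellipsoid} at that point (which collapses to the weighted ball $\vecnorm{B(\mathbf{z}-\yvec)}{2}\leq\delta$) as the confidence region, with Devroye's total-variation concentration supplying the tail bound; it then needs an extra geometric quantity $\mathcal{Z}(\Rspace_{\kstar};(\ystar)'_{\infty})$ to certify that the \emph{local} ellipsoid also satisfies the \emph{global} constraints of $\Rspace_{\kstar}$. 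Your Chebyshev-center convex combination buys a real simplification here: the Minkowski identity $B(\xvec_{N,p},\lambda\rho)=(1-\lambda)\{\xstar_{\infty}\}+\lambda B(\widehat{\xvec},\rho)\subseteq\Rspace_{\kstar}$ handles all constraints of the polytope at once by convexity, so the local-versus-global bookkeeping (and the affine transformation) disappears, and your version does not even need the sample-size floor that Devroye's lemma imposes except to make $\lambda<1$. What the paper's route buys in exchange is a sharper exponent: Devroye's lemma gives $\Prob[\vecnorm{\Xhat_N-\xvec}{1}>t]\leq 3e^{-Nt^2/25}$ with no dimension in the denominator, whereas your coordinatewise Hoeffding plus union bound and the $\vecnorm{\cdot}{2}\leq\sqrt{m}\vecnorm{\cdot}{\infty}$ conversion yield $2m\exp(-2N\lambda^2\rho^2/m)$, i.e.\ an exponent of order $m^{2p-1}N^{1-2p}$ rather than the paper's $m^{2p}N^{1-2p}$ — a factor of $m$ weaker, which is exactly the difference between the paper's $\omega(1)$ coefficient and an $o(1)$ one in $m$. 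This is absorbed by the theorem's $\widetilde{\Oh}(\cdot)$ for fixed $m$, and you can recover the paper's dependence verbatim by replacing the union bound with the same total-variation tail bound (using $\vecnorm{\cdot}{2}\leq\vecnorm{\cdot}{1}$); the rest of your argument, including the payoff decomposition over $G_N$ (which is the paper's Lemma~\ref{lem:highprob} in different clothing) and the $\Oh(1)$ per-$(N,p)$ cost after the two one-time LPs, is sound.
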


The full proof of Theorem~\ref{thm:mtimesnobs}, deferred to Appendix~\ref{sec:thm1proof}, involves some technical steps to achieve as good as possible a scaling in $N$.
The caveat of Theorem~\ref{thm:mtimesnobs} is that commitment power can be robustly exploited in this way only if there are enough observations of the commitment. 
One obvious requirement is that the best-response-region $\Rspace_{\kstar}$ needs to be non-empty in $\reals^{m-1}$.
Second, the number of observations $N$ needs to be greater than the \textit{effective dimension} of the game for the leader, $m$.
This is a natural requirement to ensure that the follower has learned at least a meaningful estimate of the commitment.
Third, the ``constant" factors in Theorem~\ref{thm:mtimesnobs} actually reflect properties about both the local and global geometry of the polytope; see Appendix~\ref{sec:thm1proof} for more details.
Intuitively, geometric properties that lead to undesirable scaling in the constant factors in the robustness guarantee are listed below:
\begin{enumerate}
\item The Stackelberg commitment being a ``pointy" vertex: this can lead to a commitment being far away from the boundary in certain directions, but closer in others, making it more likely for a different response to be elicited.
\item Local constraints being very different from global constraints, which implies that commitments too far in the interior of the local feasibility set will no longer satisfy all the constraints of the best-response-region.
\end{enumerate}

Even with these caveats, Theorem~\ref{thm:mtimesnobs} provides an attractive general framework for constructing robust commitments by making a natural connection to interior-point methods in optimization\footnote{Noting that interior point methods are provably polynomial-time algorithms to solve LPs, it is plausible to think that in fact, stopping the interior point method appropriately early would also give us a robustness guarantee - which would imply that finding optimal \textit{robust} commitments is even easier than finding optimal commitments!}.
We observe significant empirical benefit from the constructions in the simulations in Section~\ref{sec:simulations}.

We also mention a couple of special cases of leader-follower games for which the robust commitment constructions of Theorem~\ref{thm:mtimesnobs} are not required; in fact, it is simply optimal to play Stackelberg.

\begin{remark}
For games in which the mixed-strategy Stackelberg equilibrium coincides with a pure strategy, the follower's best response is always as expected regardless of the number of observations.
There is no tradeoff and it is simply optimal to play Stackelberg even under observational uncertainty.
\end{remark}

\begin{remark}
For the zero-sum case, it was observed~\cite{blum2014lazy} that a Stackelberg commitment is made assuming that the follower will respond in the worst case.
If there is observational uncertainty, the follower can only respond in a way that yields payoff for the leader that is better than expected.
This results in an expected payoff greater than or equal to the Stackelberg payoff $\fstar_{\infty}$, and it simply makes sense to stick with the Stackelberg commitment $\xstar_{\infty}$.
As we have seen, this logic does not hold up for non-zero-sum games because different responses can lead to worse-than-expected payoff.
One way of thinking of this is that the function $f_{\infty}(.)$ can generally be discontinuous in $\xvec$ for a non-zero-sum game, but is always continuous for the special case of zero-sum.
\end{remark}

\subsection{Approximating the maximum possible payoff}

So far, we have considered the limited-observability problem and shown that the Stackelberg commitment $\xstar_{\infty}$ is not a suitable choice.
We have constructed robust commitments that come close to idealized Stackelberg payoff $\fstar_{\infty}$ and shown that the guarantee fundamentally depends on the number of observations scaling with the effective dimension of the game.
Now, we turn to the question of whether we can approximate $\fstar_N$, the actual optimum of the program.
Note that since the problem is in general non-convex in $\xvec$, it is NP-hard to exactly compute.

Rather than the traditional approach of constructing a polynomial-time-approximation-algorithm, our approach is approximation-theoretic\footnote{In other words, the extent of approximation is measured by the \textit{number of samples} as opposed to the runtime of an algorithm. 
This is very much the flavor of previously-obtained results on Stackelberg zero-sum security games~\cite{blum2014lazy}.}.
We first show that in the large-sample case, we cannot do much better than the actual Stackelberg payoff $\fstar_{\infty}$; informally speaking, our ability to fool the follower into responding \textit{strictly-better-than-expected} is limited.
Combining this with the robust commitment construction of Theorem~\ref{thm:mtimesnobs}, we obtain an approximation to the optimum payoff.

The main result of this section is stated below.
\begin{theorem}\label{thm:approximation}
We have
\begin{align*}
f^*_N &\leq f^*_{\infty} + Cn\sqrt{\frac{m}{N}} .
\end{align*}

for some constant $C > 0$ depending on the parameters of the game $(A,B)$.
\end{theorem}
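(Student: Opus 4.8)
The plan is a short ``change-of-commitment'' thought experiment. Fix any leader strategy $\xvec \in \Delta_m$. When the leader plays $\xvec$, the follower observes the (maximum-likelihood) empirical frequency vector $\Xhat_N \in \Delta_m$ built from $N$ i.i.d.\ pure-strategy draws from $\xvec$, and responds with $\kstar(\Xhat_N)$. The crucial observation is that $\kstar(\cdot)$ and $f_\infty(\cdot)$ share the \emph{same} leader-favorable tie-breaking rule, so for every realization of $\Xhat_N$ we have the deterministic identity $\inprod{\Xhat_N}{\bolda_{\kstar(\Xhat_N)}} = f_\infty(\Xhat_N)$; and since $\Xhat_N \in \Delta_m$ and $\fstar_\infty$ is the maximum of $f_\infty$ over $\Delta_m$, this quantity is at most $\fstar_\infty$. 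Taking expectations,
\begin{align*}
\EE\big[\inprod{\Xhat_N}{\bolda_{\kstar(\Xhat_N)}}\big] \;\le\; \fstar_\infty .
\end{align*}
In words: whatever response the follower is ``tricked'' into by the observational noise, that response is still a legitimate Stackelberg-admissible best response \emph{at the point $\Xhat_N$ at which it was evaluated}, so the ideal Stackelberg payoff caps the associated payoff.

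It remains to pay for the mismatch between what the follower evaluates ($\Xhat_N$) and what the leader actually plays ($\xvec$). Starting from $f_N(\xvec) = \EE\big[\inprod{\xvec}{\bolda_{\kstar(\Xhat_N)}}\big]$ and splitting $\xvec = \Xhat_N + (\xvec - \Xhat_N)$ inside the inner product,
\begin{align*}
f_N(\xvec) = \EE\big[\inprod{\Xhat_N}{\bolda_{\kstar(\Xhat_N)}}\big] + \EE\big[\inprod{\xvec - \Xhat_N}{\bolda_{\kstar(\Xhat_N)}}\big] \;\le\; \fstar_\infty + \EE\big[\inprod{\xvec - \Xhat_N}{\bolda_{\kstar(\Xhat_N)}}\big] .
\end{align*}
Expanding $\bolda_{\kstar(\Xhat_N)} = \sum_{j=1}^n \bolda_j\,\Ind[\kstar(\Xhat_N)=j]$ and bounding each inner product by $\vecnorm{\xvec - \Xhat_N}{1}$ times $\vecnorm{\bolda_j}{\infty}$ (H\"older),
\begin{align*}
\EE\big[\inprod{\xvec - \Xhat_N}{\bolda_{\kstar(\Xhat_N)}}\big] \;\le\; \sum_{j=1}^{n}\vecnorm{\bolda_j}{\infty}\,\EE\big[\vecnorm{\xvec - \Xhat_N}{1}\big] \;\le\; n\Big(\max_{i,j}|A_{ij}|\Big)\,\EE\big[\vecnorm{\xvec - \Xhat_N}{1}\big] .
\end{align*}
(Since the events $\{\kstar(\Xhat_N)=j\}$ are disjoint one may keep $\max_j \vecnorm{\bolda_j}{\infty}$ in place of the sum; I retain the factor $n$ only to match the stated bound.)

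Finally I would bound the $\ell_1$ estimation error. Each coordinate $(\Xhat_N)_i$ equals $\tfrac1N$ times a $\mathrm{Binomial}(N,x_i)$ variable, so $\EE\big|(\Xhat_N)_i - x_i\big| \le \sqrt{\var\!\big((\Xhat_N)_i\big)} = \sqrt{x_i(1-x_i)/N} \le \sqrt{x_i/N}$; summing over coordinates and applying Cauchy--Schwarz, $\EE\big[\vecnorm{\xvec - \Xhat_N}{1}\big] \le \tfrac{1}{\sqrt N}\sum_{i=1}^m \sqrt{x_i} \le \sqrt{m/N}$. Chaining the displays and taking the maximum over $\xvec$ (the bound is uniform in $\xvec$) yields $\fstar_N \le \fstar_\infty + C\,n\sqrt{m/N}$ with $C = \max_{i,j}|A_{ij}|$. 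I do not anticipate a genuine obstacle: the only step needing a little care is extracting the $\sqrt{m/N}$ rate (rather than a weaker bound) via the coordinate-wise variance estimate plus Cauchy--Schwarz, and conceptually the whole argument rests on the first paragraph's observation that the follower's realized response is Stackelberg-admissible at $\Xhat_N$, which is what lets $\fstar_\infty$ serve as a ceiling.
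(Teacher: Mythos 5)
Your proof is correct, and it takes a genuinely different and more elementary route than the paper's. The paper's proof partitions the follower's possible responses into the expected one, adjacent-to-expected ones (further split by $\ell_1$-distance of $\xvec$ to the shared boundary), and far-from-expected ones, and then controls each group separately via Sanov's theorem, Devroye's total-variation tail bound, and a geometric lemma tying the payoff of an adjacent response to the distance from the boundary. Your argument replaces all of that with the single pointwise identity $\inprod{\Xhat_N}{\bolda_{\kstar(\Xhat_N)}} = f_{\infty}(\Xhat_N) \le \fstar_{\infty}$ -- valid because $\Xhat_N \in \Delta_m$, the follower best-responds to $\Xhat_N$ with the same leader-favorable tie-breaking that defines $f_{\infty}$, and $\fstar_{\infty}$ is the maximum of $f_{\infty}$ over all of $\Delta_m$ -- followed by H\"older and the standard in-expectation bound $\EE\vecnorm{\Xhat_N - \xvec}{1} \le \sqrt{m/N}$. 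What you gain is brevity, no large-deviations machinery, and in fact a sharper constant: as you note, keeping $\max_j \vecnorm{\bolda_j}{\infty}$ in place of the sum removes the factor of $n$ entirely, so you prove $\fstar_N \le \fstar_{\infty} + C\sqrt{m/N}$, which strictly improves the stated bound. What the paper's decomposition buys instead is structural information that your one-liner hides: it identifies \emph{where} any transient gain over $\fstar_{\infty}$ must come from (commitments near a boundary with a better-than-expected adjacent response, with the far regions contributing only exponentially small terms), which is the picture the authors use in their proof-sketch narrative; your bound certifies the same rate without exposing that mechanism.
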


As a corollary the commitment construction defined in Theorem~\ref{thm:mtimesnobs} provides a $\widetilde{\Oh}(\sqrt{\frac{1}{N}})$-additive-approximation algorithm to $\fstar_N$.
The proof of Theorem~\ref{thm:approximation} is provided in the appendix.

Intellectually, Theorem~\ref{thm:approximation} tells us that the robust commitments are essentially optimal.
The practical benefit that Theorem~\ref{thm:approximation} affords us is that we now have an approximation to the optimum payoff the leader could possibly obtain, which can be computed in constant time after computing the Stackelberg equilibrium, which itself is polynomial time~\cite{conitzer2006computing}.
This is because the robust commitment is obtained by first computing Stackelberg equilibrium $\xstar_{\infty}$, and then deviating away from $\xstar_{\infty}$ in the magnitude and direction specified.
We will now study the empirical benefits of our robust commitment constructions.

\section{Simulations}\label{sec:simulations}



\subsection{Example $2 \times 2$ and $2 \times 3$ games}

First, we return to the non-zero-sum games described in Examples~\ref{eg:nonzerosumbad} and ~\ref{eg:nonzerosum}.
These were $2 \times 2$ and $2 \times 3$ games respectively, and the Stackelberg commitment was non-robust for both games.
Now, armed with the results in Theorem~\ref{thm:mtimesnobs}, we can employ our robust commitment constructions and study their performance.
To construct our robust commitments, we first computed the Stackelberg commitment using the LP solver in scipy (scipy.optimize.linprog), and then used the construction in Theorem~\ref{thm:mtimesnobs}.

Figures~\ref{fig:nonzerosumbad2} and ~\ref{fig:nonzerosum3} compares the expected payoff obtained by our robust commitment construction scheme $\{\xvec_N\}_{N \geq 1}$ for different numbers of samples $N$, and for the games described in Examples~\ref{eg:nonzerosumbad} and ~\ref{eg:nonzerosum} respectively.
The benchmark with respect to which we measure this expected payoff is the Stackelberg payoff $\fstar_{\infty}$ (obtained by Stackelberg commitment under \textit{infinite observability} and \textit{tie-breakability in favor of the leader}).
We also observe a significant gap between the payoffs obtained by these robust commitment constructions and the payoff obtained if we used the Stackelberg commitment $\xstar_{\infty}$.
We showed in theory that there is significant benefit for choosing the commitment to factor in such observational uncertainty, and we can now see it in practice.

Furthermore, for the case of $2$ leader actions we were able to brute-force the maximum possible obtainable payoff\footnote{First we used scipy.optimize.brute with an appropriate grid size to initialize, and then ran a gradient descent at that initialization point. This was feasible for the case of $2$ pure strategies.} $\fstar_N$, and compare the value to the robust commitment payoff.
This comparison is particularly valuable for smaller values of $N$, as shown in Figures~\ref{fig:nonzerosumbad3} and~\ref{fig:nonzerosum4}.
We notice that the values are much closer even than our theory would have predicted, and even for small values of $N$.
Thus, our constructions have significant practical benefit as well: we are able to get close to the optimum while drastically reducing the required computation (to just solving $n$ LPs!).

Since these examples involved $2 \times n$ games, the commitment construction became trivial (i.e. only one direction to move along) -- next, we test our commitment constructions for $m \times m$ security games.
Instead of looking at specific examples, we now look at a random ensemble to see what behavior ensues.

\subsection{Random security games}

\quickfigure{fig:randomsecuritygame0}{Illustration of random ensemble of $5 \times 5$ security game.}{RSgame}{0.5\textwidth}

\multifigureexterior{fig:randomsecuritygame}{Illustration of performance of robust commitments and Stackelberg commitment in random $5 \times 5$ Stackelberg security games for a finite number of observations of defender commitment.}{
\subfigl{0.33\textwidth}{fig:randomsecuritygame1}{Plot of expected defender payoff when defender uses robust commitments -- compared to Stackelberg commitment as well as idealized Stackelberg payoff.}{fig4a}
\subfigl{0.33\textwidth}{fig:randomsecuritygame2}{Log-log plot of the gap between robust commitment payoff and idealized Stackelberg payoff.}{fig4b}
\subfigl{0.33\textwidth}{fig:randomsecuritygame3}{Percentage plot of the gap between robust commitment payoff and idealized Stackelberg payoff.}{fig4c}}

Our next set of simulations is inspired by the security games framework.
We create a random ensemble of $5 \times 5$ security games in which the defender can defend one of $5$ targets, and the attacker can attack one of these $5$ targets.
The defender and attacker rewards are chosen to be uniformly at random between $[0,1]$, and their penalties are uniform at random between $[-1,0]$.
This is essentially the random ensemble that was created in previous empirical work on security games~\cite{an2012security}.
Figure~\ref{fig:randomsecuritygame0} shows the construction of this ensemble.

The purpose of random security games is to show that the properties we observed above -- unstable Stackelberg commitment, robust commitment payoff approximating the optimum -- are the norm rather than the exception.
Figure~\ref{fig:randomsecuritygame} illustrates the results for random security games.
The performance of the sequence of robust commitments $\{\xvec_N\}_{N \geq 1}$, as well as the Stackelberg commitment $\xstar_{\infty}$ is plotted in Figure~\ref{fig:randomsecuritygame1} against the benchmark of idealized Stackelberg performance $\fstar_{\infty}$.
Figure~\ref{fig:randomsecuritygame2} depicts the rate of convergence of the gap in robust commitment performance to the idealized Stackelberg payoff -- we can clearly see the $\Oh(\frac{1}{\sqrt{N}})$ rate of convergence in this plot.
Finally, Figure~\ref{fig:randomsecuritygame3} plots the \textit{percentage gap} between robust commitment payoff and idealized Stackelberg payoff as a function of $N$.

We can make the following conclusions from these plots:
\begin{enumerate}
\item The Stackelberg commitment is extremely non-robust \textit{on average}.
In fact we noticed that this was the case with high probability.
This happens because the Stackelberg commitment, although it can vary widely for different games in the random ensemble, is very likely on a boundary shared with other responses and therefore unstable.
\item The robust commitments are doing much better \textit{on average} than the original Stackelberg commitment even for very large values of $N$.
The stark difference in payoff between the two motivates the construction of the robust commitment, which was as easy to compute as the Stackelberg commitment.
\end{enumerate}


\section{Proof sketches}

\quickfigure{fig:prop1illustration}{Illustration of partition of the set of follower responses, $[n]$, into sets $\{\kstar\}$ (red region), \textit{alternate best responses} (purple regions) and everything else (orange regions).}{Prop1illustration}{0.5\textwidth}

In this section we describe briefly the philosophy for the proofs of our main theorems.
To understand the strong lack of robustness in Stackelberg equilibrium, it is essential to visualize the best-response-regions of the leader, i.e. subsets of the mixed strategy space for which the follower best response is a particular pure strategy.
(Note that there are $n$ such best-response regions.)
Figure~\ref{fig:prop1illustration} depicts an illustration of these best-response-regions, with the region corresponding to the follower's best response to the Stackelberg commitment highlighted in red.
The figure shows the Stackelberg commitment at a vertex (extreme-point) of the best-response polytope $\Rspace_{\kstar}$; this is generally the case~\cite{conitzer2006computing}.

First, the reason for the strong instability of Stackelberg commitment to even an infinitesmal amount of uncertainty can be seen from Figure~\ref{fig:prop1illustration}: an infinitesmal amount of fluctuation in how the leader commitment is observed will make the follower respond with a different pure strategy \textit{with constant probability}, corresponding to the regions depicted in purple.
Because of the tie-breaking assumption, it turns out that the expected payoff from any of these alternate responses is strictly worse than the Stackelberg payoff.
These facts are proved formally using the Berry-Esseen theorem.
Note that an uncertainty in commitment could also lead to a response from one of the yellow regions in the figure (which could either hurt or benefit the leader), but the probability of this happening turns out to decay exponentially.

This observation implied that the optimality of the Stackelberg commitment \textit{under ideal assumptions} was exactly what made it suboptimal under a small amount of uncertainty; we exploit this to construct the robust commitment constructions of Theorem~\ref{thm:mtimesnobs}.
The qualitative idea is to push the commitment to a small extent into the interior of the best-response-region $\Rspace_{\kstar}$ so that it simultaneously satisfies a property of being ``close" to the Stackelberg commitment, while also ensuring that the fluctuations in its empirical estimate are highly likely to stay in $\Rspace_{\kstar}$ (which guarantees that the identity of the best response of the follower is preserved).
For the special case of $m = 2$, this is a simple tradeoff to navigate as there is only one direction in which one can move into the interior.
For higher dimensions, we take inspiration from the rich literature on interior-point methods and, in fact, use Dikin ellipsoids~\cite{kannan2012random} for both the commitment construction and analysis.
Ensuring that the fluctuations of the commitment preserve the follower best response with high probability, in particular, requires sophisticated tail bounds on discrete distribution learning and a careful consideration of the best-response-polytope geometry.

The proof of Theorem~\ref{thm:approximation} ties several facts that we have seen formally, as well as alluded to, together.
First, a generalization of Theorem~\ref{thm:stackelbergnotrobust} tells us that we cannot improve sizeably over Stackelberg by committing to any mixed strategy \textit{on the boundary} between two or more best-response regions.
Second, we show that the improvement gained by a \textit{fixed commitment} in the interior of any best-response-region decreases exponentially with $N$, simply because the probability of eliciting a better-than-expected response decreases exponentially with $N$.
Putting these two facts together, the natural thing to try would be commitments that approach a boundary as $N$ increases (much like our robust commitment constructions, but now with a different motive). This should happen fast enough that we maintain a sizeable probability of eliciting a different response for every $N$, while simultaneously ensuring that that different response is actually better-than-expected.
We then show that the ensuing gain over and above Stackelberg would have to decrease with $N$ according to the rate specified.


\section{Conclusions and Discussion}
We constructed robust commitment constructions with several advantages.
First, we are able to effectively preserve the Stackelberg payoff by ensuring a high-probability guarantee on the follower responding as expected.
An oblique, but significant philosophical advantage to our robust commitments is that their guarantees hold even if we removed the pivotal assumption of follower breaking ties in favor of the leader.
We essentially showed that as the number of observations $N$ grows, our construction naturally converges to the Stackelberg commitment at a specific rate.
We also motivated that the constructions, which were inspired by interior point geometry, are computable in polynomial time given the Stackelberg commitment.

Second, we established fundamental limits on the ability of the leader to gain over and above Stackelberg payoff.
We formally showed that this ability disappears in the large-sample regime, and in a certain sense that our robust commitments are approximately optimal.
Our results established a formal connection between leader payoff and follower discrete distribution learning, and in the context of these limits, both players are mutually incentivized to increase learnability under limited samples, even though the setting is non-cooperative -- which was a rather surprising conclusion.

Our work provides implications for both leader and follower payoffs when the leader is known to be committed to a fixed strategy, but the commitment can only be revealed partially.
However, our model took commitment establishment for granted, i.e. the follower assumed that the leader would indeed be drawing its pure strategies iid from the same mixture in every round.
The partial reputation setting should most generally be modeled as a repeated game (either with a finite-horizon or discounted model), in which the belief in commitment needs to be built up over time.
Studying the problem of finite \textit{observability} of commitment in isolation is, in our view, an important first step towards eventually solving this problem, which poses many modeling challenges in itself.
In earlier rounds, directly responding to the empirical estimate of leader commitment will be suboptimal for the follower.
Instead, he may want to maintain a possibility that the leader will play minimax/Nash and respond accordingly.
From the point of view of the leader, if the iid assumption is removed, an interesting question is whether the leader could choose to play more deterministically, in such a way to increase strategy learnability while maintaining a follower impression of iid commitment.
By doing this, the leader could establish commitment faster but also run the risk of looking too deterministic/predictable in time, in which case the follower would take undue advantage.
Conversely, the leader may not even be incentivized to increase follower learnability in the finitely repeated, or discounted setting.

Finally, it is interesting to think about the applicability of the robust commitment perspective to algorithmically more difficult problems like Bayesian persuasion and public/private signalling games with multiple followers, which can have observational limitations on information transfer in much the same way as has been described for the applications in this paper.

\subsubsection*{Acknowledgments}
We thank the anonymous reviewers for valuable feedback.
We gratefully acknowledge the support of the NSF through grant AST-1444078, and the Berkeley ML4Wireless research center.

\bibliographystyle{alpha}
\bibliography{reputationreferences}

\appendix

\newpage
\section{Proofs}

Before moving into the proofs themselves, we define some additional notation.
\begin{definition}
The set of \textbf{alternate follower best response} to the mixed commitment $\xvec$ is denoted by
\begin{align*}
\Kspace^*_{\mathsf{alt}}(\xvec) := \Kspace^*(\xvec) - \{\kstar\} .
\end{align*}
\end{definition}

We will be particularly interested in this set for the Stackelberg commitment, that is, $\Kspace^*_{\mathsf{alt}}(\xstar_{\infty})$.
In general, the set will be non-empty as the follower could be agnostic between more than one pure strategy in response -- it is only responding with the pure strategy $\kstar$ to break ties in the leader's favor.
Figure~\ref{fig:prop1illustration} shows this demarcation of follower responses into the expected response $\kstar$, and alternate responses to the Stackelberg commitment $\xstar_{\infty}$.

Further, we denote maximum and minimum obtainable leader payoffs respectively by
\begin{align*}
f_{max} &:= \max_{i \in [m], j \in [n]} A_{ij} \\
f_{min} &:= \min_{i \in [m], j \in [n]} A_{ij} .
\end{align*}

\subsection{Proof of Theorem~\ref{thm:stackelbergnotrobust}}\label{sec:prop1proof}

We consider a general $2 \times n$ game and denote the Stackelberg probability of leader playing pure strategy $1$ by $\pstar_{\infty}$.
Recall that $\pstar_{\infty} \in (0,1)$ (since we have assumed for the proof that the Stackelberg commitment is mixed).
Let $j_{\mathsf{alt}}$ be \textit{the} alternate response to the Stackelberg commitment, i.e. we have $\Kspace^*(\pstar_{\infty}) = \{\kstar_{\infty},j_{\mathsf{alt}}\}$.
Without loss of generality, the best-response regions can be described as
\begin{align*}
\Rspace_{\kstar_{\infty}} &= [p^-,\pstar_{\infty}] \\
\Rspace_{j_{\mathsf{alt}}} &= (\pstar_{\infty},p^+] .
\end{align*}

Finally, we define $f^{(2)} := \lim_{\epsilon \to 0} f_{\infty}(\pstar_{\infty} + \epsilon)$.
Since we are considering leader-follower games for which the function $f_{\infty}(.)$ is discontinuous at $\pstar_{\infty}$, by the tie-breaking assumption on Stackelberg commitment we will have $f^{(2)} < \fstar_{\infty}$.

Now, we consider the quantity $f_N(\pstar_{\infty})$.
Denoting $\Phat_N$ as the empirical estimate of the quantity $\pstar$, we have
\begin{align*}
f_N(\pstar_{\infty}) &\leq \Pr\left[\Phat_N \in \Rspace_{\kstar_{\infty}} \right] \fstar_{\infty} + \Pr\left[\Phat_N \in \Rspace_{j_{\mathsf{alt}}} \right] f^{(2)} + \left(1 - \Pr\left[\Phat_N \in \Rspace_{\kstar_{\infty}} \right] - \Pr\left[\Phat_N \in \Rspace_{j_{\mathsf{alt}}} \right]\right) f_{max} \\
&= \Pr\left[\Phat_N \in (p^-,\pstar_{\infty}] \right] \fstar_{\infty} + \Pr\left[ \Phat_N \in (\pstar_{\infty},p^+] \right] f^{(2)} + \Pr\left[\Phat_N \in [0,p^-] \cup (p^+,1] \right] f_{max} \\
&= \fstar_{\infty} - \underbrace{\Pr\left[\Phat_N \in (\pstar_{\infty},p^+]\right]}_{T_1(N)} (\fstar_{\infty} - f^{(2)}) + \underbrace{\Pr\left[\Phat_N \in [0,p^-] \cup (p^+,1] \right]}_{T_2(N)} (f_{max} - \fstar_{\infty}) .
\end{align*}

We will now proceed to bound the probabilities $T_1(N)$ and $T_2(N)$.

First, we deal with the quantity $T_2(N)$, which reflects the probability of a mismatched response that is neither Stackelberg nor the alternate response on the boundary.
By the Hoeffding bound, we have
\begin{align*}
T_2(N) &:= \Pr\left[\Phat_N \in [0,p^-] \cup (p^+,1] \right] \\
&= \Pr\left[\Phat_N \in [0,p^-]\right] + \Pr\left[\Phat_N \in (p^+,1]\right] \\
&\leq \exp\{-2N(\pstar_{\infty} - p^-)^2\} + \exp\{-2N(p^+ - \pstar_{\infty})^2\}.
\end{align*}

Denoting $C'' := 2 \left(\min\{p^+ - \pstar_{\infty},\pstar_{\infty} - p^-\}\right)^2$, we then have
\begin{align}\label{eq:T2N}
T_2(N) \leq 2\exp\{-NC''\} 
\end{align}
and as expected, this probability decays exponentially with $N$.

Next, we deal with the quantity $T_1(N)$, which reflects the probability of eliciting the alternate response on the Stackelberg boundary.
We show that this event is non-vanishingly probable.

We define the following quantities
\begin{align}
S_N &:= N\Phat_N \\
Z_N &:= \frac{S_N - N\pstar_{\infty}}{\sqrt{N\pstar_{\infty}(1-\pstar_{\infty})}} .
\end{align}

Recall that $Z_N$ is a real-valued random variable.
We denote its cumulative distribution function by $F_N(.)$.

By a simple change of variables, we then have
\begin{align*}
T_1(N) &= \Pr\left[ \Phat_N \in (\pstar_{\infty},p^+]\right] \\
&= \Pr\left[Z_N \in \left(0, \frac{\sqrt{N}(p^+ - \pstar_{\infty})}{\sqrt{\pstar_{\infty}(1-\pstar_{\infty})}}\right]\right] \\
&= F_N(\frac{\sqrt{N}(p^+ - \pstar_{\infty})}{\sqrt{\pstar_{\infty}(1-\pstar_{\infty})}}) - F_N(0) .
\end{align*}

Now, recall that $S_N = \sum_{j=1}^N I_j$ for iid random variables $I_j \sim \BER(\pstar_{\infty})$.
Also note that since we have considered games with mixed Stackelberg commitment, we have $0 < \pstar_{\infty} < 1$.
We now invoke the first half of the classical Berry-Esseen theorem~\cite{berry1941accuracy,esseen1942liapounoff} stated here as a lemma.

\begin{lemma}\label{lem:berryesseen}
There exists a positive constant $C$ such that if $I_1,I_2,\ldots$ are iid random variables with $\EE[I_1] = \mu < \infty$, $\var(I_1) = \sigma^2 > 0$ and $\EE[|I_1 - \mu|^3] = \rho < \infty$, we have
\begin{align*}
|F_N(x) - \phi(x)| \leq \frac{C\rho}{\sigma^3 \sqrt{N}}
\end{align*}

for all $x \in \reals$, where $\phi(.)$ denotes the CDF of the standard normal distribution $\NORMAL(0,1)$.
\end{lemma}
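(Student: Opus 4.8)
The plan is to establish Lemma~\ref{lem:berryesseen} by the standard Fourier-analytic route based on Esseen's smoothing inequality, rather than quoting it verbatim. First I would reduce to the standardized case: replacing each $I_j$ by $(I_j - \mu)/\sigma$ changes neither $F_N$ nor $\phi$ and sends $\rho$ to $\rho/\sigma^3$, so it suffices to prove $\sup_{x \in \reals}|F_N(x) - \phi(x)| \le C\rho/\sqrt{N}$ under the normalization $\mu = 0$, $\sigma^2 = 1$, $\rho = \EE[|I_1|^3] \ge 1$ (the last by Lyapunov's inequality). Write $\varphi$ for the characteristic function of $I_1$; then $Z_N$ has characteristic function $\varphi_N(t) = \varphi(t/\sqrt{N})^N$, while $\NORMAL(0,1)$ has characteristic function $e^{-t^2/2}$.

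The second ingredient is Esseen's smoothing inequality: for every $T > 0$,
\begin{align*}
\sup_{x \in \reals}|F_N(x) - \phi(x)| \;\le\; \frac{1}{\pi}\int_{-T}^{T}\left|\frac{\varphi_N(t) - e^{-t^2/2}}{t}\right|dt \;+\; \frac{c_0}{T},
\end{align*}
where $c_0$ is an absolute constant stemming from $\sup_x|\phi'(x)| = (2\pi)^{-1/2}$. The work is then to control the integrand on the window $|t| \le T := a\sqrt{N}/\rho$ for a small absolute constant $a$. A second-order Taylor expansion of $\varphi$ (using $\EE[I_1] = 0$, $\EE[I_1^2] = 1$ and the third-moment control of the remainder) gives, for $|u| \le 1/\rho$, both $|\varphi(u) - e^{-u^2/2}| \le C_1\rho|u|^3$ and $|\varphi(u)| \le e^{-u^2/2 + \rho|u|^3/6} \le e^{-u^2/3}$. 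Feeding this, together with the elementary bound $|z^N - w^N| \le N|z-w|\max(|z|,|w|)^{N-1}$ applied to $z = \varphi(t/\sqrt{N})$ and $w = e^{-t^2/(2N)}$, yields a pointwise estimate of the form $|\varphi_N(t) - e^{-t^2/2}| \le C_2\rho N^{-1/2}|t|^3 e^{-t^2/3}$ valid on $|t| \le T$.

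Substituting this into the smoothing inequality, the integral is at most $C_2\rho N^{-1/2}\int_{\reals} t^2 e^{-t^2/3}\,dt = C_3\rho/\sqrt{N}$, and the second term is $c_0/T = (c_0/a)\,\rho/\sqrt{N}$; adding them gives the claimed bound $C\rho/\sqrt{N}$ with a fully explicit $C$, and undoing the standardization restores the factor $\sigma^{-3}$ to recover the statement. The only step demanding genuine care — and hence the main obstacle — is the uniform bound $|\varphi(t/\sqrt{N})| \le e^{-t^2/(3N)}$ across the entire window $|t| \le T$: near $t = 0$ it is immediate from the quadratic term, but one must pick $a$ small enough relative to the universal constants in the Taylor remainder so that this survives being raised to the $N$th power (and so that $\max(|z|,|w|)^{N-1}$ contributes only a bounded factor); everything else is bookkeeping of absolute constants and a Gaussian moment integral. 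No separate treatment of large $|t|$ is needed, since the smoothing inequality absorbs that regime into the $c_0/T$ term.
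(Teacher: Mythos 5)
Your argument is sound in outline, but it takes a genuinely different route from the paper for the simple reason that the paper does not prove this lemma at all: it is stated verbatim as the classical Berry--Esseen theorem and discharged by citation to Berry (1941) and Esseen (1942), then applied to the Bernoulli variables $I_j \sim \BER(\pstar_{\infty})$, for which the hypotheses $\sigma^2 > 0$ and $\rho < \infty$ are immediate. What you have written is essentially the standard Fourier-analytic proof of that classical theorem: the reduction to $\mu = 0$, $\sigma = 1$ with $\rho \geq 1$ by Lyapunov, Esseen's smoothing inequality with cutoff $T = a\sqrt{N}/\rho$, the third-order Taylor control of the characteristic function on $|u| \leq 1/\rho$, the telescoping bound $|z^N - w^N| \leq N|z-w|\max(|z|,|w|)^{N-1}$, and the Gaussian moment integral. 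Each step is the textbook one and the one place you flag as delicate --- keeping $|\varphi(t/\sqrt{N})|^{N-1}$ uniformly subgaussian over the whole window by taking $a$ small --- is indeed the only point requiring care; your observation that the smoothing inequality obviates any separate large-$|t|$ analysis is also correct. What the paper's approach buys is economy: for a result of this vintage a citation is the appropriate level of rigor, and the paper needs only the Bernoulli special case. What your approach buys is a self-contained argument with, in principle, explicit constants, which is more than the paper requires but not wrong. The one caveat is that at this level of detail you should not present the pointwise estimate $|\varphi(u)| \leq e^{-u^2/3}$ and the constant in the exponent of the final bound $e^{-t^2/3}$ as automatic; the $(N-1)/N$ exponent loss and the choice of $a$ interact, and a fully written version would need to fix these constants consistently, as you yourself acknowledge.
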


It is easy to verify that the distribution $I_1 \sim \BER(\pstar_{\infty})$ satisfies the above conditions.
Therefore, we can directly apply Lemma~\ref{lem:berryesseen} and get
\begin{align*}
F_N\left(\frac{\sqrt{N}(p^+ - \pstar_{\infty})}{\sqrt{\pstar_{\infty}(1-\pstar_{\infty})}}\right) &\geq \phi(C\sqrt{N}) - \frac{C'
}{\sqrt{N}} \text{ and } \\
F_N(0) &\leq \frac{1}{2} + \frac{C'
}{\sqrt{N}} 
\end{align*}

for positive constant $C > 0$, thus giving 
\begin{align}\label{eq:T1N}
T_1(N) \geq \left(\phi(C'\sqrt{N}) - \frac{1}{2}\right) - \frac{C'}{\sqrt{N}} .
\end{align}

Substituting for the expressions for $T_1(N)$ and $T_2(N)$, we now have
\begin{align*}
\fstar_{\infty} - f_N(\pstar_{\infty}) \geq \left(\left(\phi(C'\sqrt{N}) - \frac{1}{2}\right) - \frac{C'}{\sqrt{N}}\right) C 
- 2C\exp\{-NC''\} ,
\end{align*}

which corresponds exactly to Equation~\eqref{eq:stackelbergnotrobust}.
Clearly, the right hand side of this equation is decreasing in $N$ and so the first corollary -- that $f_N(\pstar_{\infty}) \leq \fstar_{\infty}$ for $N \geq N_0$ -- holds.
Precisely, we have 

\begin{align*}
\lim_{N \to \infty}\phi(\sqrt{N}C') &= 1 \\
\lim_{N \to \infty} \frac{C'}{\sqrt{N}} &= 0 \\
\lim_{N \to \infty} 2C\exp\{-NC''\} &= 0 ,
\end{align*}

and so we have 
\begin{align*}
\fstar_{\infty} - \lim_{N \to \infty} f_N(\pstar_{\infty}) \geq \frac{\fstar_{\infty} - f^{(2)}}{2} .
\end{align*}

This is the second corollary from Theorem~\ref{thm:stackelbergnotrobust} and completes the proof.
\qed 

\subsection{Proof of Theorem~\ref{thm:mtimesnobs}}\label{sec:thm1proof}

\subsubsection{Notation}

For this proof, it will be convenient to consider the $(m-1)$-dimensional representation of the probability simplex, i.e.
\begin{align*}
\Delta_{m-1} := \{\yvec \succeq \mathbf{0} \text{ and } \inprod{\yvec}{\mathbf{1}} \leq 1\} .
\end{align*}

Then, we can represent a commitment $\xvec \in \Delta_m$ by its $(m-1)$-dimensional representation $\yvec = \begin{bmatrix}
x_1 & x_2 & \ldots & x_{m-1}
\end{bmatrix}$, and the \textit{leader payoff} if the follower were to respond with pure strategy $j \in [n]$ by 
\begin{align*}
\inprod{\yvec}{\mathbf{c}_j} + d_j
\end{align*} 

where we have
\begin{align*}
\mathbf{c}_j &:= \begin{bmatrix}
a_{j,1} - a_{j,m} \\
a_{j,2} - a_{j,m} \\
\vdots \\
a_{j,m-1} - a_{j,m} 
\end{bmatrix} \\
d_j = a_{j,m} .
\end{align*}

Similarly, we can represent the corresponding \textit{follower payoff} by
\begin{align*}
\inprod{\yvec}{\mathbf{b}'_j} + d'_j
\end{align*} 

where we have
\begin{align*}
\mathbf{b}'_j &:= \begin{bmatrix}
b_{j,1} - b_{j,m} \\
b_{j,2} - b_{j,m} \\
\vdots \\
b_{j,m-1} - b_{j,m} 
\end{bmatrix} \\
d'_j = b_{j,m} .
\end{align*}

We can also represent this representation of the empirical estimate of $\yvec$ from $N$ samples by $\Yhat_N$, and this representation Stackelberg commitment by $\ystar_{\infty}$.

Now, we can consider all the functions introduced in Section~\ref{sec:obsmodel} in terms of the commitment $\xvec$ and equivalently define them in terms of the $(m-1)$-dimensional representation of the commitment, $\yvec$.

We also denote the $p$th operator norm of a matrix by $\matsnorm{.}{p}$.

\subsubsection{The commitment construction}

We consider the $(m-1)$-dimensional representation of the best-response-region corresponding to the Stackelberg commitment, $\Rspace_{\kstar}$.
There are many things to consider while constructing a robust commitment.
The first, and obvious, one would be that the follower should respond the same way as it would to Stackelberg when it observes the full mixture.
That is, we would have $\kstar(\yvec_N) = \kstar$ or alternatively stated, $\yvec_N \in \Rspace_{\kstar}$.

Intuitively, the expected payoff of a leader commitment under observational uncertainty, particularly in terms of gap to the optimal Stackelberg payoff, will depend on two factors: one, how likely the follower is to respond the same as it would if it observed the full commitment; and two, how ``far" the leader commitment mixture is from the optimal Stackelberg commitment mixture.
We qualitatively show this dependence in the following lemma.

\begin{lemma}\label{lem:highprob}
Consider a commitment $\yvec_N$ for which we can provide the following guarantee:
\begin{align*}
\Pr[\Yhat_N \notin \Rspace_{\kstar}] \leq \epsilon_N .
\end{align*}
We then have 
\begin{align*}
\fstar_{\infty} - f_N(\yvec_N) \leq 2 (1-\epsilon_N) f_{max} \vecnorm{\yvec_N - \ystar_{\infty}}{1} + \epsilon_N (\fstar_{\infty} - f_{min}) 
\end{align*}
\end{lemma}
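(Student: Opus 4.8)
The plan is to decompose the expected payoff $f_N(\yvec_N) = \EE[\inprod{\yvec_N}{\bolda_{\kstar(\Yhat_N)}}]$ (in the $(m-1)$-dimensional representation, $\inprod{\yvec_N}{\mathbf{c}_{\kstar(\Yhat_N)}} + d_{\kstar(\Yhat_N)}$) by conditioning on whether the empirical estimate $\Yhat_N$ lands in the correct best-response region $\Rspace_{\kstar}$ or not. On the good event $\{\Yhat_N \in \Rspace_{\kstar}\}$, which has probability at least $1-\epsilon_N$, the follower responds with $\kstar$, so the leader earns exactly $g_{\kstar}(\yvec_N) := \inprod{\yvec_N}{\mathbf{c}_{\kstar}} + d_{\kstar}$; on the bad event, which has probability at most $\epsilon_N$, the leader earns at least $f_{min}$. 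This gives the lower bound
\begin{align*}
f_N(\yvec_N) \geq (1-\epsilon_N)\, g_{\kstar}(\yvec_N) + \epsilon_N f_{min},
\end{align*}
so that
\begin{align*}
\fstar_{\infty} - f_N(\yvec_N) \leq (1-\epsilon_N)\big(\fstar_{\infty} - g_{\kstar}(\yvec_N)\big) + \epsilon_N(\fstar_{\infty} - f_{min}).
\end{align*}

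The remaining work is to control the first term, $\fstar_{\infty} - g_{\kstar}(\yvec_N)$, i.e. the gap between the ideal Stackelberg payoff and the (nominal, full-observation) payoff the leader would get from committing to $\yvec_N$ and receiving response $\kstar$. The key observation is that $\fstar_{\infty} = f_{\infty}(\ystar_{\infty}) = \inprod{\ystar_{\infty}}{\mathbf{c}_{\kstar}} + d_{\kstar} = g_{\kstar}(\ystar_{\infty})$, because $\kstar$ is by definition the response elicited at the Stackelberg commitment. Hence
\begin{align*}
\fstar_{\infty} - g_{\kstar}(\yvec_N) = g_{\kstar}(\ystar_{\infty}) - g_{\kstar}(\yvec_N) = \inprod{\ystar_{\infty} - \yvec_N}{\mathbf{c}_{\kstar}},
\end{align*}
and this is bounded by Hölder's inequality: $\inprod{\ystar_{\infty} - \yvec_N}{\mathbf{c}_{\kstar}} \leq \vecnorm{\ystar_{\infty} - \yvec_N}{1}\, \vecnorm{\mathbf{c}_{\kstar}}{\infty}$. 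It remains only to check that $\vecnorm{\mathbf{c}_{\kstar}}{\infty} \leq 2 f_{max}$: each coordinate of $\mathbf{c}_{\kstar}$ has the form $a_{\kstar, i} - a_{\kstar, m}$, which is at most $f_{max} - f_{min} \leq 2\max\{|f_{max}|, |f_{min}|\}$; with the normalization/sign conventions implicit in the statement (the bound is stated with $2 f_{max}$, treating $f_{max}$ as an absolute scale for payoff magnitudes, and $|f_{min}|$ absorbed likewise), this yields $\inprod{\ystar_{\infty} - \yvec_N}{\mathbf{c}_{\kstar}} \leq 2 f_{max} \vecnorm{\yvec_N - \ystar_{\infty}}{1}$. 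Substituting back gives exactly the claimed inequality.

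The main obstacle — really the only non-bookkeeping point — is being careful that on the bad event the payoff is genuinely bounded below by $f_{min}$ regardless of which alternate response the follower picks, and that $\fstar_\infty - f_{min} \ge 0$ so the $\epsilon_N$ term has the right sign; both are immediate from the definitions of $f_{min}$ and $f_{max}$ as the global min and max entries of $A$. A secondary subtlety is the constant in $\vecnorm{\mathbf{c}_{\kstar}}{\infty} \leq 2 f_{max}$: strictly this requires either a sign/normalization assumption on the payoff matrix (e.g. entries in $[0, f_{max}]$, or payoffs shifted to be nonnegative) or reading $f_{max}$ in the lemma as shorthand for $\max\{|f_{max}|,|f_{min}|\}$; since the lemma is explicitly labeled ``qualitative,'' I would simply remark that $\|\mathbf c_{\kstar}\|_\infty$ is at most twice the largest payoff magnitude and fold any such constant into the statement. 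Everything else is a one-line application of conditioning plus Hölder's inequality.
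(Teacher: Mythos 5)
Your proposal is correct and follows essentially the same route as the paper's proof: decompose $f_N(\yvec_N)$ over the event $\{\Yhat_N \in \Rspace_{\kstar}\}$ versus its complement, lower-bound the bad-event payoff by $f_{min}$, identify $\fstar_{\infty} - \left(\inprod{\yvec_N}{\mathbf{c}_{\kstar}} + d_{\kstar}\right)$ with $\inprod{\ystar_{\infty} - \yvec_N}{\mathbf{c}_{\kstar}}$, and apply H\"older's inequality with $\vecnorm{\mathbf{c}_{\kstar}}{\infty} \leq 2 f_{max}$. Your side remark about the sign convention needed for the constant $2 f_{max}$ is a fair point that the paper glosses over, but it does not change the substance of the argument.
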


\begin{proof}
We have
\begin{align*}
f_N(\yvec_N) &= \sum_{j=1}^n \Pr[\Yhat_N \in \Rspace_j] (\inprod{\yvec_N}{\mathbf{c}_j} + d_j)  \\
&\geq \Pr[\Yhat_N \in \Rspace_{\kstar}] (\inprod{\yvec_N}{\mathbf{c}_{\kstar}} + d_{\kstar}) + (1-\Pr[\Yhat_N \in \Rspace_{\kstar}]) f_{min} \\
&\geq (1-\epsilon_N)\left( \inprod{\yvec_N}{\mathbf{c}_{\kstar}} + d_{\kstar} - f_{min} \right) + f_{min}\\
&= (1-\epsilon_N)\left( \inprod{\yvec_N}{\mathbf{c}_{\kstar}} + d_{\kstar}\right) + \epsilon_N f_{min} .
\end{align*}

Recall that we have $\fstar_{\infty} = \inprod{\ystar_{\infty}}{\mathbf{c}_{\kstar}} + d_{\kstar}$.
Therefore, the gap from Stackelberg is bounded as 
\begin{align*}
\fstar_{\infty} - f_N(\yvec_N) &\leq (1-\epsilon_N) \inprod{\ystar_{\infty} - \yvec_N}{\mathbf{c}_{\kstar}} + \epsilon_N (\fstar_{\infty} - f_{min}) \\
&\leq (1-\epsilon_N) \vecnorm{\mathbf{c}_{\kstar}}{\infty} \vecnorm{\yvec_N - \ystar_{\infty}}{1} + \epsilon_N (\fstar_{\infty} - f_{min}) \\
&\leq 2 (1-\epsilon_N) f_{max} |\vecnorm{\yvec_N - \ystar_{\infty}}{1} + \epsilon_N (f^*_{\infty} - f_{min}) ,
\end{align*}

where the second inequality follows from Holder's inequality.
This proves the lemma.
\end{proof}

This lemma implies that we want a commitment construction $\yvec_N$ with the following two-fold guarantee\footnote{Interestingly, the fact that $\ystar_{\infty}$ is on an extreme point of $\Rspace_{\kstar}$ will imply that the two conditions are at odds with one another, and we will need to trade them off. For instance, choosing $\yvec_N = \ystar_{\infty}$ would satisfy the second condition perfectly by being as close as possible to the Stackelberg commitment, but there would be no guarantee on the best-response as it lies on the boundary of the best-response region.}.
\begin{enumerate}
\item $\vecnorm{\yvec_N - \ystar_{\infty}}{1}$ is bounded (and ideally vanishes with $N$).
\item $\Yhat_N \in \Rspace_{\kstar}$ with high probability.
\end{enumerate}

\subsubsection{Commitment construction using localized geometry}

We will leverage the special structure of the Dikin ellipsoid~\cite{kannan2012random} used in interior-point methods to make our commitment constructions.
Observe that $\ystar_{\infty}$ is always going to be on an extreme point (vertex) of the best-response-polytope\footnote{Recall that the Stackelberg equilibrium is the solution to the LP defined on the best-response-polytope~\cite{conitzer2006computing}.} $\Rspace_{\kstar}$.
We now collect the $k = |\Kspace^*(\xstar_{\infty})|$ constraints that are satisfied \textit{with equality} at $\xstar_{\infty}$:

\begin{align*}
\inprod{\yvec}{\boldb'_j}  + d'_j &\leq \inprod{\yvec}{\boldb'_{\kstar}} + d'_{\kstar} \text{ for all } j \in \Kspace^*(\xstar_{\infty})  .
\end{align*}

This is simply the constraint set for commitments such that the follower prefers to respond with pure strategy $\kstar$ over any pure strategy $j \in \Kspace^*(\ystar_{\infty})$ (i.e. any pure strategy whose corresponding best-response-polytope shares a boundary with the Stackelberg best-response-polytope at point $\ystar$), and can be thought of as the set of \textit{local constraints to the Stackelberg vertex} in the best-response polytope $\Rspace_{\kstar}$.
We also collect the other constraints that describe $\Rspace_{\kstar}$:

\begin{align*}
\inprod{\yvec}{\boldb'_j}  + d'_j &\leq \inprod{\yvec}{\boldb'_{\kstar}} + d'_{\kstar} \text{ for all } j \notin \Kspace^*(\xstar_{\infty}) \cup \{\kstar\}  \\
\yvec &\succeq 0\\
\inprod{\mathbf{1}}{\yvec} &\leq 1 ,
\end{align*}

and together with the local constraints at the Stackelberg vertex, these describe the global constraints for the polytope.

We represent the system of inequalities in matrix form as: $B\yvec \preceq \mathbf{c}$ for some $B \in \reals^{k \times (m-1)}$ and some $\mathbf{c} \in \reals^k$.
We leverage the following useful fact about a general set of linear constraints.
\begin{fact}\label{fact:affine}
For any parameterization of linear constraints $(B, \mathbf{c})$, there exists an \textit{affine} transformation $\yvec' =  T_1 \yvec + T_2$ (where $T_1 \in \reals^{(m-1) \times (m-1)}$ is invertible and $T_2 \in \reals^{m-1}$) and a matrix $B' \in \reals^{k \times (m-1)}$ such that
\begin{align*}
B \yvec \preceq \mathbf{c} \iff B' \yvec' \preceq \mathbf{1} .
\end{align*}
We denote the transformation function by $T(\cdot)$ and its inverse by $T^{-1}(\cdot)$.
In particular, we note the relationship $B = B'T_1$.
\end{fact}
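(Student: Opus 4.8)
The plan is to put the constraint system into the canonical form by a single affine change of variables assembled from two pieces: a translation that sends an interior point of the feasible region to the origin (which makes every right-hand side strictly positive), followed by a diagonal rescaling of the $k$ constraints (which normalizes those right-hand sides to $1$). First I would observe that we may assume the feasible set $P := \{\yvec \in \reals^{m-1} : B\yvec \preceq \mathbf{c}\}$ is nonempty and full-dimensional; in the intended application $P$ is the best-response-region $\Rspace_{\kstar}$, which is assumed nonempty in $\reals^{m-1}$, and any degenerate case is handled by first restricting to the affine hull of $P$ (or is vacuous). Pick any $\yvec_0$ in the relative interior of $P$, so that $B\yvec_0 \prec \mathbf{c}$ holds coordinatewise.

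Next I would translate: writing $\zvec := \yvec - \yvec_0$, the system $B\yvec \preceq \mathbf{c}$ becomes $B\zvec \preceq \widetilde{\mathbf{c}}$ with $\widetilde{\mathbf{c}} := \mathbf{c} - B\yvec_0 \succ \mathbf{0}$ by the choice of $\yvec_0$. Since every coordinate of $\widetilde{\mathbf{c}}$ is now strictly positive, $D := \diag(\widetilde{\mathbf{c}}) \in \reals^{k \times k}$ is invertible, and left-multiplying by $D^{-1}$ gives the equivalent system $(D^{-1}B)\zvec \preceq \mathbf{1}$. Composing the two operations gives the affine map $\yvec' = T_1 \yvec + T_2$ with (in the simplest version) $T_1 = I_{m-1}$, $T_2 = -\yvec_0$, together with $B' := D^{-1}B$; by construction $B\yvec \preceq \mathbf{c} \iff B'\yvec' \preceq \mathbf{1}$. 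Tracking how the constraint matrix was transformed records the claimed relationship between $B$ and $B'$ (here $B = D B' = D B' T_1$; any further invertible linear change of coordinates one wishes to perform is simply recorded in $T_1$).

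The only step carrying real content — hence the one I would be careful about — is the existence of the strictly interior point $\yvec_0$, since that is exactly what forces $\widetilde{\mathbf{c}} \succ \mathbf{0}$ and hence makes $D^{-1}$ well defined; this is precisely where the hypothesis that $\Rspace_{\kstar}$ is nonempty in $\reals^{m-1}$ is used. A minor caveat is that if one wants the normalized system to keep the same $k$ facet inequalities (and not merely the same feasible set), the original $k$ inequalities should be irredundant — redundant ones just get tightened harmlessly under the rescaling; apart from that, everything reduces to routine bookkeeping with invertible affine maps.
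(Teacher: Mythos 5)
The paper states this as a \emph{Fact} and offers no proof (only a footnote conceding that in some degenerate cases it holds only after augmenting the variable space), so your attempt can only be judged against the statement itself. Your translate-then-rescale construction does produce an equivalent system $B'\yvec' \preceq \mathbf{1}$, and it even sends the Stackelberg vertex to a point where all constraints are tight at $\mathbf{1}$, which is what the downstream argument needs. But it does not deliver the asserted relation $B = B'T_1$: with $T_1 = I$ and $B' = D^{-1}B$ you get $B = DB'T_1$ for $D = \diag(\mathbf{c} - B\yvec_0)$, which equals the claimed relation only when all slacks at $\yvec_0$ happen to coincide. You notice this in passing but treat it as bookkeeping; it is not. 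If one insists on $B = B'T_1$ together with constraintwise equivalence, then $B'\yvec' = B\yvec + BT_1^{-1}T_2$ forces $BT_1^{-1}T_2 = \mathbf{1} - \mathbf{c}$, i.e.\ $\mathbf{1} - \mathbf{c} \in \mathrm{range}(B)$ \emph{regardless of the choice of $T_1$} --- so no invertible linear change of coordinates can absorb the diagonal rescaling, which acts on the $k$-dimensional constraint-index space rather than the $(m-1)$-dimensional variable space. The relation $B = B'T_1$ is not decorative: it is used in Lemma~\ref{lem:dikinweightedball} to pull the Dikin ellipsoid back to the original coordinates as $\vecnorm{B(\zvec - \yvec)}{2} \leq \delta$, and hence feeds the constant $\matsnorm{B}{\mathsf{op}}$ in Lemma~\ref{lem:weightedballconc}. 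Your construction would instead yield $\vecnorm{D^{-1}B(\zvec - \yvec)}{2} \leq \delta$, so the results survive with $\matsnorm{D^{-1}B}{\mathsf{op}}$ in place of $\matsnorm{B}{\mathsf{op}}$, but the Fact as literally stated is not proved.

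The repair in the intended setting is a pure translation, no rescaling. The constraints $(B,\mathbf{c})$ are the $k$ local constraints tight at the vertex, so $\mathbf{c} = B\ystar_{\infty}$, and the requirement $\mathbf{1} - \mathbf{c} \in \mathrm{range}(B)$ reduces to $\mathbf{1} \in \mathrm{range}(B)$. When the vertex is simple this holds, and one may take $B' = B$, $T_1 = I$, and $T_2 = \mathbf{w} - \ystar_{\infty}$ for any $\mathbf{w}$ with $B\mathbf{w} = \mathbf{1}$; then $B'\yvec' = B\yvec + \mathbf{1} - \mathbf{c}$, the equivalence holds constraint by constraint, $B = B'T_1$ exactly, and $\ystar_{\infty}$ maps to $\mathbf{w}$ with $B\mathbf{w} = \mathbf{1}$ as required. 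When $\mathbf{1} \notin \mathrm{range}(B)$ (a degenerate vertex with $k > m-1$ independent tight constraints) the Fact genuinely fails without augmentation --- this is the case the paper's footnote is hedging against, and your argument does not isolate it. Separately, your reliance on a strictly interior point is sound but slightly misplaced: what is needed is a point at which the $k$ \emph{local} constraints are strictly slack (any interior point of $\Rspace_{\kstar}$ works), not full-dimensionality of the global polytope per se.
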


The above fact is useful\footnote{A subtle point is that there do exist special cases of polytope constraints for which Fact~\ref{fact:affine} is true only with an augmentation of the variable space from $m$ to $2m$ dimensions.
Then, defining the invertible map becomes trickier.
Nevertheless, for ease of exposition and clarity in the proof, we assume that we can indeed carry out the affine transformation without augmenting the dimension.
} because it is most convenient to define our class of commitments in the transformed space $\yvec' = T(\yvec)$.
\begin{definition}\label{def:robustcommitment}
For a particular value of $\delta \in (0,1)$, Stackelberg commitment $\ystar_{\infty}$, and local constraints modeled by $(B, \mathbf{c})$, we define a $\delta$-deviation commitment by
\begin{align*}
\yvec(\delta;\ystar_{\infty}) &:= T^{-1}(\yvec'(\delta;(\ystar)'_{\infty})) \text{ where } \\
\yvec'(\delta;(\ystar)'_{\infty}) &:= (1 - \delta) (\ystar)'_{\infty} .
\end{align*}
\end{definition}


Our robust commitments $\{\yvec_N\}_{N \geq 1}$ are going to be taken out of the set of $\delta$-deviation commitments, with appropriately chosen values of $\{\delta_N\}_{N \geq 1}$.
\textit{Clearly, the computational complexity of constructing any $\delta$-deviation commitment is equivalent to the complexity of computing the Stackelberg equilibrium itself.}

To understand how to set these values, we will turn to the question of how to satisfy the three conditions above.

First, we observe that $\yvec(\delta;\ystar_{\infty})$ satisfies the \textit{local constraints} $B\yvec \preceq \mathbf{c}$ for any $\delta \in (0,1)$.
Because of Fact~\ref{fact:affine}, it suffices to show that its affine transformation $\yvec'(\delta;(\ystar)'_{\infty})$ satisfies the local constraints $B' \yvec' \preceq \mathbf{1}$.
Recall that $(\ystar)'_{\infty}$ satisfies \textit{all the local constraints} with equality, i.e. we have $B' (\ystar)'_{\infty} = \mathbf{1}$.
From the definition of the commitment, we thus have
\begin{align*}
B' \yvec'(\delta;(\ystar)'_{\infty}) &= (1 - \delta) B' (\ystar)'_{\infty} \\
&= (1 - \delta) \mathbf{1} \preceq \mathbf{1} .
\end{align*}

Next, we turn to the question of how close such a defined commitment would be from the Stackelberg commitment $\ystar_{\infty}$, in terms of the $\ell_1$ norm.
For this, we have
\begin{align*}
\vecnorm{\yvec(\delta;\ystar_{\infty}) - \ystar_{\infty}}{1} &= \vecnorm{T_1^{-1}(\yvec'(\delta;(\ystar)'_{\infty}) - (\ystar)'_{\infty})}{1} \\
&= \delta \vecnorm{T^{-1} (\ystar)'_{\infty}}{1} \\
&= \delta \vecnorm{\ystar_{\infty}}{1} \leq \delta .
\end{align*}

Therefore, we have
\begin{align}\label{eq:robustcommitmentclose}
\vecnorm{\yvec(\delta;\ystar_{\infty}) - \ystar_{\infty}}{1} \leq \delta .
\end{align}

In lieu of Lemma~\ref{lem:highprob}, we wish to choose values $\{\delta_N\}_{N \geq 1}$ (to create commitments $\{\yvec_N\}_{N \geq 1}$) such that $\delta_N$ decreases with $N$ sufficiently fast, while maintaining a high probability of staying in the best-response polytope $\Rspace_{\kstar}$.
To understand the rate at which we can decrease $\delta_N$, we need to prove a high-probability best-response guarantee.

\subsubsection{Using the \textbf{local} Dikin ellipsoid as a confidence ball}

For a (affine-transformed) commitment $\yvec'(\delta;(\ystar)'_{\infty})$, we make use of the \textit{local Dikin ellipsoid} centered at $\yvec'(\delta;(\ystar)'_{\infty})$, defined below for an arbitrary point $\yvec'$.
\begin{definition}[~\cite{kannan2012random}]
For constraint set $B' \yvec' \preceq \mathbf{1}$, the \textbf{Dikin ellipsoid} of radius $r$ centered at $\yvec'$ is given by
\begin{align}\label{eq:dikindef}
\mathbb{B}_{B', \mathbf{1}, \yvec'}(r) := \{\mathbf{z}': (\mathbf{z}' - \yvec')^\top H(\yvec') (\mathbf{z}' - \yvec') \leq r\} ,
\end{align}

where we define
\begin{align}\label{eq:hdefinition}
H(\yvec') := \sum_{i=1}^k \frac{(\boldb')_i (\boldb')_i^\top}{(1 - \inprod{(\boldb')_i}{\yvec'})^2} .
\end{align}

The Dikin ellipsoid has two special properties~\cite{kannan2012random}:
\begin{enumerate}
\item \textbf{Affine invariance}: (using the notation from Fact~\ref{fact:affine}) For transformation $\yvec' = T(\yvec)$, the Dikin ellipsoid of radius $r$ centered at the point $\yvec$ for the polytope $B \yvec \preceq \mathbf{c}$ is $\mathbb{B}_{B, \mathbf{c}, \yvec'}(r) = T^{-1}(\mathbb{B}_{B', \mathbf{1},  \yvec'}(r))$.
\item \textbf{Interior guarantee}: For any \textit{interior} point $\yvec'$ (according to the constraint set $B' \yvec' \preceq \mathbf{1}$), the Dikin ellipsoid of radius $1$ centered at $\yvec'$ is contained in the feasibility set, that is,
\begin{align*}
\mathbf{z}' \in \mathbb{B}_{B', \mathbf{1}, \yvec'}(1) \implies B' \mathbf{z}' \preceq \mathbf{1} .
\end{align*}
\end{enumerate}
\end{definition}

We center our Dikin ellipsoid at $\yvec'(\delta;(\ystar)'_{\infty})$, and observe that the constraint takes on a particularly nice form, as stated by the following simple lemma.

\begin{lemma}\label{lem:dikinweightedball}
For any $\delta \in (0,1)$, the Dikin ellipsoid can be expressed as
\begin{align}\label{eq:dikinaffinetransformed}
\mathbb{B}_{B', \mathbf{1},\yvec'(\delta;(\ystar)'_{\infty})}(1) = \{\mathbf{z}': \vecnorm{B'(\mathbf{z}' - \yvec'(\delta;(\ystar)'_{\infty}))}{2} \leq \delta \} .
\end{align}

Furthermore, in the original space we can write
\begin{align}\label{eq:dikinoriginal}
\mathbb{B}_{B, \mathbf{c},\yvec(\delta;\ystar_{\infty})}(1) = \{\mathbf{z}: \vecnorm{B(\mathbf{z} - \yvec(\delta;\ystar_{\infty}))}{2} \leq \delta \} .
\end{align}
\end{lemma}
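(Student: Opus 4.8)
The plan is to exploit the one special property of the $\delta$-deviation commitment of Definition~\ref{def:robustcommitment}: scaling the Stackelberg vertex toward the origin by a factor $(1-\delta)$ moves it off \emph{every} local constraint hyperplane by the \emph{same} amount. First I would recall that $(\ystar)'_{\infty}$ saturates all $k$ local constraints, $\inprod{(\boldb')_i}{(\ystar)'_{\infty}} = 1$ for $i \in [k]$ (the relation $B'(\ystar)'_{\infty} = \mathbf{1}$ established above). Hence $\inprod{(\boldb')_i}{\yvec'(\delta;(\ystar)'_{\infty})} = 1-\delta$ for all $i$, so every denominator in the Hessian \eqref{eq:hdefinition} equals $\delta^2$, giving the constant-Hessian identity $H(\yvec'(\delta;(\ystar)'_{\infty})) = \delta^{-2}\sum_{i=1}^k (\boldb')_i (\boldb')_i^\top = \delta^{-2}\,(B')^\top B'$. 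Plugging this into the Dikin-ellipsoid inequality \eqref{eq:dikindef} with $r = 1$ and clearing the factor $\delta^{-2}$ turns $(\mathbf{z}'-\yvec')^\top H (\mathbf{z}'-\yvec') \le 1$ into $\vecnorm{B'(\mathbf{z}'-\yvec'(\delta;(\ystar)'_{\infty}))}{2}^2 \le \delta^2$, which is exactly \eqref{eq:dikinaffinetransformed}.

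For the original-space form \eqref{eq:dikinoriginal}, I would invoke the affine-invariance property of Dikin ellipsoids together with Fact~\ref{fact:affine}. Since $\yvec(\delta;\ystar_{\infty}) = T^{-1}(\yvec'(\delta;(\ystar)'_{\infty}))$ by definition, affine invariance gives $\mathbb{B}_{B,\mathbf{c},\yvec(\delta;\ystar_{\infty})}(1) = T^{-1}\big(\mathbb{B}_{B',\mathbf{1},\yvec'(\delta;(\ystar)'_{\infty})}(1)\big)$. Writing $\mathbf{z}' = T(\mathbf{z}) = T_1\mathbf{z}+T_2$ and using that $T$ is affine, $\mathbf{z}' - \yvec'(\delta;(\ystar)'_{\infty}) = T_1(\mathbf{z}-\yvec(\delta;\ystar_{\infty}))$; combined with $B = B'T_1$ from Fact~\ref{fact:affine}, this yields $B'(\mathbf{z}'-\yvec'(\delta;(\ystar)'_{\infty})) = B(\mathbf{z}-\yvec(\delta;\ystar_{\infty}))$, so the membership condition transfers verbatim to $\vecnorm{B(\mathbf{z}-\yvec(\delta;\ystar_{\infty}))}{2} \le \delta$. (One could alternatively bypass affine invariance by checking directly that the local-constraint slacks $c_i - \inprod{\boldb_i}{\yvec(\delta;\ystar_{\infty})}$ also all equal $\delta$ — which follows from the slack-preserving nature of the transformation in Fact~\ref{fact:affine} — and then rerunning the Hessian computation in the original coordinates.)

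I do not anticipate a genuine obstacle: the content is purely the evaluation of the Dikin Hessian at the scaled vertex, made clean by the fact that all relevant constraints are equidistant from that vertex. The only place demanding care is consistent use of the conventions of Fact~\ref{fact:affine} — that $(\boldb')_i$ denotes the $i$th row of $B'$, that the affine map sends $\mathbf{c}$ to $\mathbf{1}$ while preserving each constraint's slack, and that $B = B'T_1$ — so that the $T_1$ factors telescope correctly between $B$ and $B'$ when passing between the two coordinate systems.
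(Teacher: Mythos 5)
Your proposal matches the paper's proof essentially step for step: evaluating the Dikin Hessian at the scaled vertex using $B'(\ystar)'_{\infty}=\mathbf{1}$ to get $H = \delta^{-2}(B')^\top B'$, and then transferring to the original coordinates via affine invariance and the identity $B = B'T_1$. No gaps; this is the same argument.
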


\begin{proof}
From Definition~\ref{def:robustcommitment}, we observe that $B' \yvec'(\delta;(\ystar)'_{\infty}) = (1 - \delta) B' (\ystar)'_{\infty} = (1 - \delta) \mathbf{1}$.
This implies that
\begin{align*}
1 - \inprod{(\boldb')_i}{\yvec'(\delta;(\ystar)'_{\infty})} = 1 - (1 - \delta) = \delta ,
\end{align*}

and thus we have
\begin{align*}
H(\yvec'(\delta;(\ystar)'_{\infty})) &= \frac{\sum_{i=1}^k (\boldb')_i (\boldb')_i^\top}{\delta^2} \\
&= \frac{(B')^\top B'}{\delta^2} 
\end{align*}

where in the last equality step, we have used $(B')^\top B' = \sum_{i=1}^k (\boldb')_i (\boldb')_i^\top$, noting that $(\boldb')_i$ denotes the $i^{th}$ row of $B'$.

Thus, the ellipsoid constraint in Equation~\eqref{eq:dikindef} can be rewritten as 
\begin{align*}
\frac{1}{\delta^2} (\mathbf{z}' - \yvec'(\delta;(\ystar)'_{\infty}))^\top (B')^\top B' (\mathbf{z}' - \yvec'(\delta;(\ystar)'_{\infty})) &\leq 1 \\
\implies  \vecnorm{B'(\mathbf{z}' - \yvec'(\delta;(\ystar)'_{\infty}))}{2}^2 &\leq \delta^2 \\
\implies \vecnorm{B'(\mathbf{z}' - \yvec'(\delta;(\ystar)'_{\infty}))}{2} &\leq \delta ,
\end{align*}

thus completing the first part of the proof (Equation~\eqref{eq:dikinaffinetransformed}).

For the second part of the proof, we use the affine invariance property of the Dikin ellipsoid, which tells us that 
\begin{align*}
\zvec \in \mathbb{B}_{B, \mathbf{c}, \yvec}(1) \implies \zvec' = T_1 \zvec + T_2 \in \mathbb{B}_{B', \mathbf{1}, \yvec'}(1) \\
\implies \vecnorm{B'(\mathbf{z}' - \yvec'(\delta;(\ystar)'_{\infty}))}{2} &\leq \delta .
\end{align*}

Now, observe that 
\begin{align*}
B'(\zvec' -  \yvec'(\delta;(\ystar)'_{\infty})) &= B'(T_1 \zvec + T_2 - T_1  \yvec(\delta;\ystar_{\infty}) - T_2) \\
&= (B' T_1) (\zvec -  \yvec(\delta;\ystar_{\infty})) \\
&= B (\zvec -  \yvec(\delta;\ystar_{\infty}))
\end{align*}

where in the last step we have used the relationship $B = B'T_1$ from Fact~\ref{fact:affine}.
Putting these observations together, we have
\begin{align*}
\zvec \in \mathbb{B}_{B, \mathbf{c}, \yvec(\delta;\ystar_{\infty})}(1) \implies \vecnorm{B (\zvec -  \yvec(\delta;\ystar_{\infty}))}{2} \leq \delta ,
\end{align*}

completing the second part of the proof.
\end{proof}

At this stage, it is worth remembering that the commitment is \textit{mixed}, and the payoff from using a $\delta$-deviation commitment $\yvec(\delta;\ystar_{\infty}) \in \Delta_{m-1}$ under a finite number of observations $N$ depends on the guarantee that its observed empirical distribution $\Yhat_N$ (typically) stays inside the best-response region.
As a starting point we need to guarantee that at least the \textit{local vertex constraints} are not violated.

Note that $\yvec(\delta;\ystar_{\infty}) \in \Delta_{m-1}$ is an interior point for any $\delta > 0$, and thus the interior guarantee property of the Dikin ellipsoid can be applied.
We thus know that if the empirical distribution of the commitment stays inside the Dikin ellipsoid centered at the actual commitment, it will stay inside the local constraint feasibility set.
Thus, it makes sense to use the Dikin ellipsoid as a confidence ball and tail bound the probability that the empirical estimate lies outside this ball.
Because of the weighted $\ell_2$-ball structure on the particular ellipsoid corresponding to a $\delta$-deviation commitment that we proved in Lemma~\ref{lem:dikinweightedball}, this is not difficult to do.
We state this formally in the following lemma.

\begin{lemma}\label{lem:weightedballconc}
For a given $\delta > 0$, let $\Yhat_N$ be the empirical distribution of $N$ samples drawn from the $\delta$-deviation commitment $\yvec(\delta;\ystar_{\infty})$.
Then, we have
\begin{align*}
\Pr \left[\Yhat_N \notin  \mathbb{B}_{B, \mathbf{c}, \yvec(\delta;\ystar_{\infty})}(1) \right] \leq 3 \exp\{-\frac{N \delta^2}{ 25 \matsnorm{B}{\mathsf{op}}^2}\}
\end{align*}

provided that $N \geq \frac{20 m\matsnorm{B}{\mathsf{op}}^2}{\delta^2}$.
\end{lemma}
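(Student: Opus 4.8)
The plan is to peel off the geometry using Lemma~\ref{lem:dikinweightedball}, reduce the ellipsoid-membership event to a total-variation deviation of an empirical distribution, and then close the bound with a first-moment estimate followed by a bounded-differences concentration inequality. Concretely, Lemma~\ref{lem:dikinweightedball} says the event $\Yhat_N \notin \mathbb{B}_{B,\mathbf{c},\yvec(\delta;\ystar_{\infty})}(1)$ is exactly $\vecnorm{B(\Yhat_N - \yvec(\delta;\ystar_{\infty}))}{2} > \delta$. Since $\vecnorm{Bv}{2} \le \matsnorm{B}{\mathsf{op}}\vecnorm{v}{2} \le \matsnorm{B}{\mathsf{op}}\vecnorm{v}{1}$ for any vector $v$, it suffices to bound $\Pr\left[\vecnorm{\Yhat_N - \yvec(\delta;\ystar_{\infty})}{1} > \delta/\matsnorm{B}{\mathsf{op}}\right]$, i.e. the $\ell_1$ distance between the empirical distribution of $N$ i.i.d. draws and the distribution $\yvec(\delta;\ystar_{\infty})$ from which they are drawn.

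Next I would control the expected deviation. Writing $\yvec := \yvec(\delta;\ystar_{\infty})$ and letting $\hat y_i$ denote the empirical frequency of outcome $i$, we have $\EE|\hat y_i - y_i| \le \sqrt{\var(\hat y_i)} = \sqrt{y_i(1-y_i)/N}$, so by Cauchy--Schwarz $\EE\vecnorm{\Yhat_N - \yvec}{1} \le \frac{1}{\sqrt N}\sum_i \sqrt{y_i} \le \sqrt{m/N}$. This is exactly where the factor $m$ enters. Under the hypothesis $N \ge 20m\matsnorm{B}{\mathsf{op}}^2/\delta^2$, this expectation is at most $\delta/(\matsnorm{B}{\mathsf{op}}\sqrt{20})$, a fixed small fraction of the target radius $\delta/\matsnorm{B}{\mathsf{op}}$.

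Finally I would concentrate $g(I_1,\dots,I_N) := \vecnorm{\Yhat_N - \yvec}{1}$ about its mean. Replacing one sample changes $\Yhat_N$ by a vector of $\ell_1$ norm at most $2/N$, so $g$ has the bounded-differences property with constant $2/N$, and McDiarmid's inequality gives $\Pr[g \ge \EE g + t] \le \exp(-Nt^2/2)$. Taking $t = \delta/\matsnorm{B}{\mathsf{op}} - \EE g \ge (1 - 1/\sqrt{20})\,\delta/\matsnorm{B}{\mathsf{op}}$ and simplifying, the exponent is at most $-N\delta^2/(25\matsnorm{B}{\mathsf{op}}^2)$ and the deliberately conservative prefactor $3$ absorbs the remaining slack; alternatively one can run a Hilbert-space Azuma/Bernstein bound directly on $\vecnorm{B(\Yhat_N-\yvec)}{2}$, viewed as the norm of a sum of $N$ i.i.d. bounded mean-zero vectors, which produces a constant prefactor of exactly this type. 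Chaining the three steps gives the stated bound.

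The proof has no real conceptual obstacle; the care needed is bookkeeping of constants --- choosing the slack in the sample-size condition so that the second step pushes $\EE g$ safely below the radius \emph{and} the third step still yields a clean exponent of the form $-N\delta^2/(c\,\matsnorm{B}{\mathsf{op}}^2)$ --- together with tracking the harmless factor-of-two losses incurred in passing between $\ell_2$ and $\ell_1$ norms and between the $(m-1)$-dimensional and the full-simplex representations of the empirical distribution. The particular numbers $20$, $25$, $3$ are artifacts of exactly how that budget is split.
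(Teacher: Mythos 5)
Your reduction is identical to the paper's: both proofs invoke Lemma~\ref{lem:dikinweightedball} to rewrite the complement of the ellipsoid event as $\vecnorm{B(\Yhat_N - \yvec(\delta;\ystar_{\infty}))}{2} > \delta$, then pass through $\vecnorm{Bv}{2} \le \matsnorm{B}{\mathsf{op}}\vecnorm{v}{2} \le \matsnorm{B}{\mathsf{op}}\vecnorm{v}{1}$ to reduce everything to an $\ell_1$ deviation bound for the empirical distribution at radius $\delta/\matsnorm{B}{\mathsf{op}}$. Where you diverge is in how that $\ell_1$ bound is obtained: the paper simply cites Devroye's lemma, which states verbatim that $\Pr[\vecnorm{\Yhat_N - \yvec}{1} \ge \delta] \le 3\exp\{-N\delta^2/25\}$ whenever $\delta \ge \sqrt{20m/N}$ --- this citation is the sole source of the constants $20$, $25$, $3$ and of the sample-size condition $N \ge 20m\matsnorm{B}{\mathsf{op}}^2/\delta^2$. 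You instead re-derive an equivalent (in fact slightly sharper) bound from scratch: $\EE\vecnorm{\Yhat_N - \yvec}{1} \le \sqrt{m/N}$ via Jensen and Cauchy--Schwarz, then McDiarmid with bounded differences $2/N$, which under the stated hypothesis yields an exponent of roughly $-0.3\,N\delta^2/\matsnorm{B}{\mathsf{op}}^2$, comfortably below $-N\delta^2/(25\matsnorm{B}{\mathsf{op}}^2)$, and with prefactor $1 \le 3$. Your route is longer but self-contained, and it makes transparent \emph{why} the sample-size condition is needed (to push the mean of the $\ell_1$ deviation safely inside the target radius), whereas the paper's route is a one-line appeal to a black box whose threshold happens to have exactly that form. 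Both arguments are correct; the only bookkeeping point to watch in yours is that $\yvec$ lives in the $(m-1)$-dimensional representation of the simplex, which you flag and which only improves your constants.
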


\begin{proof}
The proof is a simple consequence of Devroye's lemma~\cite{devroye1983equivalence}, which tail bounds the total variation between the empirical estimate of a discrete distribution and the true distribution.
\begin{lemma}[\cite{devroye1983equivalence}]\label{lem:devroye}
Let $\Yhat_N$ be the empirical distribution of $N$ samples drawn from any distribution $\yvec \in \Delta_{m-1}$. Then, as long as $\delta \geq \sqrt{\frac{20m}{N}}$ we have
\begin{align*}
\Pr\left[\vecnorm{\Yhat_N - \yvec}{1} \geq \delta \right] \leq 3\exp\{\frac{N\delta^2}{25}\} .
\end{align*}
\end{lemma}

We note from Lemma~\ref{lem:dikinweightedball} that 
\begin{align*}
\Yhat_N \notin  \mathbb{B}_{B, \mathbf{c}, \yvec(\delta;\ystar_{\infty})}(1) \implies \vecnorm{B (\Yhat_N -  \yvec(\delta;\ystar_{\infty}))}{2} > \delta ,
\end{align*}

and thus, we have
\begin{align*}
\Pr \left[\Yhat_N \notin  \mathbb{B}_{B, \mathbf{c}, \yvec(\delta;\ystar_{\infty})}(1) \right] &= \Pr\left[ \vecnorm{B (\Yhat_N -  \yvec(\delta;\ystar_{\infty}))}{2} > \delta \right] \\
&\stackrel{\1}\leq \Pr\left[ \vecnorm{B}{op} \vecnorm{\Yhat_N -  \yvec(\delta;\ystar_{\infty}))}{2} > \delta \right] \\
&\stackrel{\2}\leq \Pr\left[ \vecnorm{B}{op} \vecnorm{\Yhat_N -  \yvec(\delta;\ystar_{\infty}))}{1} > \delta \right] \\
&= \Pr\left[  \vecnorm{\Yhat_N -  \yvec(\delta;\ystar_{\infty}))}{1} > \delta/\vecnorm{B}{op} \right]
\end{align*}

where inequality $\1$ uses the definition of the operator norm and inequality $\2$ uses the fact that $\vecnorm{\mathbf{v}}{2} \leq \vecnorm{\mathbf{v}}{1}$ for any finite-dimensional vector $\mathbf{v}$.
Applying Lemma~\ref{lem:devroye} directly then gives us
\begin{align*}
\Pr \left[\Yhat_N \notin  \mathbb{B}_{B, \mathbf{c}, \yvec(\delta;\ystar_{\infty})}(1) \right] \leq 3 \exp\{-\frac{N \delta^2}{ 25 \matsnorm{B}{\mathsf{op}}^2}\} 
\end{align*}

as long as 
\begin{align*}
\frac{\delta}{\matsnorm{B}{\mathsf{op}}} &\geq \sqrt{\frac{20m}{N}} \\
\implies N &\geq \frac{20 m\matsnorm{B}{\mathsf{op}}^2}{\delta^2} .
\end{align*}

This completes the proof.

\end{proof}

\subsubsection{Completing proof of Theorem~\ref{thm:mtimesnobs}: Ensuring global constraint satisfiability}

Let us recap what we have proved so far about a $\delta$-deviation commitment $\yvec(\delta;\ystar_{\infty})$ for any $\delta \in (0,1)$.
\begin{enumerate}
\item For $N$ samples from $\yvec(\delta;\ystar_{\infty})$, we have $\Pr \left[\Yhat_N \notin  \mathbb{B}_{B, \mathbf{c}, \yvec(\delta;\ystar_{\infty})}(1) \right] \leq 3 \exp\{-\frac{N \delta^2}{ 25 \matsnorm{B}{\mathsf{op}}^2}\}$ (from Lemma~\ref{lem:weightedballconc}).
\item $\vecnorm{\yvec(\delta;\ystar_{\infty}) - \ystar_{\infty}}{1} \leq \delta$.
\end{enumerate}

Thus, from Lemma~\ref{lem:highprob} we have for any $\delta$-deviation commitment,
\begin{align*}
\fstar_{\infty} - f_N(\yvec(\delta;\ystar_{\infty})) \leq 2 \delta f_{max} + \Pr\left[\Yhat_N \notin \Rspace_{\kstar} \right] (\fstar_{\infty} - f_{min}) 
\end{align*}

Thus, if we had $\mathbb{B}_{B, \mathbf{c}, \yvec(\delta;\ystar_{\infty})}(1) \subset \Rspace_{\kstar}$, we would have 
\begin{align*}
\Pr\left[\Yhat_N \notin \Rspace_{\kstar} \right] \leq \Pr \left[\Yhat_N \notin  \mathbb{B}_{B, \mathbf{c}, \yvec(\delta;\ystar_{\infty})}(1) \right] \leq 3 \exp\{-\frac{N \delta^2}{ 25 \matsnorm{B}{\mathsf{op}}^2}\} .
\end{align*}

However, the set $\Rspace_{\kstar}$ includes \textit{global constraints} in addition to the local constraints $B \yvec \preceq \mathbf{c}$, and all points in the \textit{local} Dikin ellipsoid need not satisfy these constraints.
This is the final technicality in the proof that we now deal with.
We will see that for a small enough value of $\delta$ (that depends on how the local geometry of the polytope relates to the global geometry), we can guarantee global satisfiability.
Let the constraints corresponding to the convex polytope $\Rspace_{\kstar}$ be represented by $C \yvec \preceq \mathbf{d}$, and the corresponding constraints after the affine transformation $(T_1,T_2)$ be represented as $C' \yvec' \preceq \mathbf{d}'$ (where the values of $\mathbf{d}'$ corresponding the local constraints are $1$).
Thus, for the vertex $(\ystar)'_{\infty}$, we can define the quantity
\begin{align*}
\mathcal{Z}(\Rspace_{\kstar};(\ystar)'_{\infty}) := \sup \{\delta > 0: \zvec' \in \mathbb{B}_{B', \mathbf{1}, \yvec'(\delta;(\ystar)'_{\infty})}(1) \implies C' \zvec' \preceq \mathbf{d}' \} .
\end{align*}

Because $\Rspace_{\kstar}$ is \textit{non-empty and convex}, we have $\mathcal{Z}(\Rspace_{\kstar};(\ystar)'_{\infty}) > 0$.

From this definition, under the condition $\delta < \mathcal{Z}(\Rspace_{\kstar};(\ystar)'_{\infty})$ we have
\begin{align*}
\mathbb{B}_{B', \mathbf{1}, \yvec'(\delta;(\ystar)'_{\infty})}(1) \subset T(\Rspace_{\kstar}) \\
\implies \mathbb{B}_{B, \mathbf{1}, \yvec(\delta;\ystar_{\infty}}(1) \subset \Rspace_{\kstar} ,
\end{align*}

where the last implication is because of the affine-invariance property of the Dikin ellipsoid.

On the other hand, we used the condition $N \geq \frac{20 m\matsnorm{B}{\mathsf{op}}^2}{\delta^2}$ to prove Lemma~\ref{lem:weightedballconc}.
Combining these inequalities tells us that we require $N > \frac{20 m\matsnorm{B}{\mathsf{op}}^2}{\mathcal{Z}(\Rspace_{\kstar};(\ystar)'_{\infty})^2} = \Oh(m)$ to prove our result.

Then, we formally define our robust commitment for a particular value of $N$ below, and prove this final lemma which is essentially a formal statement of Theorem~\ref{thm:mtimesnobs}.
\begin{lemma}\label{lem:mainthmformal}
For every $N > \frac{20 m\matsnorm{B}{\mathsf{op}}^2}{\mathcal{Z}(\Rspace_{\kstar};(\ystar)'_{\infty})^2}$, and every $p < 1/2$, we define the $p$-robust commitment as a $\delta_{N,p}$-deviation commitment $\yvec_{N,p} := \yvec(\delta_{N,p}; \ystar_{\infty})$, where
\begin{align}\label{eq:deltanp}
\delta_{N,p} &:= \mathcal{Z}(\Rspace_{\kstar};(\ystar)'_{\infty}) \left(\frac{m}{N}\right)^p .
\end{align}

We then have
\begin{align*}
f_N(\yvec_{N,p}) &\leq \frac{2 f_{max}} \cdot \mathcal{Z}(\Rspace_{\kstar};(\ystar)'_{\infty}) \cdot \left(\frac{m}{N}\right)^p + 3 \exp\{-\frac{m^{2p} \cdot \mathcal{Z}(\Rspace_{\kstar};(\ystar)'_{\infty})^2 \cdot N^{1-2p}}{25 \matsnorm{B}{\mathsf{op}}^2}\} (\fstar_{\infty} - f_{min})  \\
&= \Oh\left(\left(\frac{m}{N}\right)^p + \exp\{-\omega(1) \cdot N^{1- 2p}\}\right) .
\end{align*}

\end{lemma}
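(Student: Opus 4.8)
\emph{Proof plan.} The plan is to combine the two properties of $\delta$-deviation commitments that have already been established with the payoff-gap estimate of Lemma~\ref{lem:highprob}, and then to choose the single free parameter $\delta$ as a function of $N$ so as to balance the two resulting error terms. There is essentially no new idea needed beyond careful bookkeeping, since all the heavy lifting --- the closeness bound \eqref{eq:robustcommitmentclose}, the Dikin-ellipsoid concentration of Lemma~\ref{lem:weightedballconc}, and the local-to-global reduction --- is in the preceding lemmas.

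First I would instantiate Lemma~\ref{lem:highprob} at $\yvec_N = \yvec(\delta;\ystar_{\infty})$. The second term there is handled immediately by Equation~\eqref{eq:robustcommitmentclose}, which gives $\vecnorm{\yvec(\delta;\ystar_{\infty}) - \ystar_{\infty}}{1} \leq \delta$, so it remains to produce an $\epsilon_N$ with $\Pr[\Yhat_N \notin \Rspace_{\kstar}] \leq \epsilon_N$. To this end I would argue that for $\delta < \mathcal{Z}(\Rspace_{\kstar};(\ystar)'_{\infty})$ one has $\mathbb{B}_{B, \mathbf{c}, \yvec(\delta;\ystar_{\infty})}(1) \subseteq \Rspace_{\kstar}$: the defining property of $\mathcal{Z}$ says exactly that the affine-transformed ellipsoid satisfies the global constraints $C'\yvec' \preceq \mathbf{d}'$, and the affine-invariance property (Fact~\ref{fact:affine}) transports this containment back to the original coordinates; positivity of $\mathcal{Z}(\Rspace_{\kstar};(\ystar)'_{\infty})$ follows from $\Rspace_{\kstar}$ being non-empty and convex. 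Then $\{\Yhat_N \notin \Rspace_{\kstar}\} \subseteq \{\Yhat_N \notin \mathbb{B}_{B, \mathbf{c}, \yvec(\delta;\ystar_{\infty})}(1)\}$, so Lemma~\ref{lem:weightedballconc} lets me take $\epsilon_N = 3\exp\{-N\delta^2/(25\matsnorm{B}{\mathsf{op}}^2)\}$, valid as long as $N \geq 20 m \matsnorm{B}{\mathsf{op}}^2/\delta^2$.

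Next I would set $\delta = \delta_{N,p} = \mathcal{Z}(\Rspace_{\kstar};(\ystar)'_{\infty})(m/N)^p$ as in Equation~\eqref{eq:deltanp} and check that both hypotheses of the previous step hold: $\delta_{N,p} < \mathcal{Z}(\Rspace_{\kstar};(\ystar)'_{\infty})$ because $p>0$ and $N>m$, while the sample requirement $N \geq 20 m \matsnorm{B}{\mathsf{op}}^2/\delta_{N,p}^2$ rearranges to $N^{1-2p} \geq 20 m^{1-2p}\matsnorm{B}{\mathsf{op}}^2/\mathcal{Z}(\Rspace_{\kstar};(\ystar)'_{\infty})^2$, which holds for every $N$ above the threshold quoted in the lemma precisely because $p < 1/2$. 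Substituting $\delta_{N,p}$ into the bound from the previous step, with $\epsilon_N = 3\exp\{-m^{2p}\mathcal{Z}(\Rspace_{\kstar};(\ystar)'_{\infty})^2 N^{1-2p}/(25\matsnorm{B}{\mathsf{op}}^2)\}$ and $1-\epsilon_N \leq 1$, yields the estimate displayed in the lemma; the $\Oh(\cdot)$ form then follows because $1-2p>0$ makes $N^{1-2p} = \omega(1)$.

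The hard part will not be any single estimate but rather getting the local-versus-global reduction airtight. The Dikin ellipsoid is built only from the constraints $(B,\mathbf{c})$ that are tight at the Stackelberg vertex, so its interior guarantee alone keeps $\Yhat_N$ in the \emph{local} feasibility set, not in $\Rspace_{\kstar}$ itself; it is the geometry-dependent quantity $\mathcal{Z}(\Rspace_{\kstar};(\ystar)'_{\infty})$ --- together with $\matsnorm{B}{\mathsf{op}}$ --- that absorbs this gap, which is exactly why these constants sit inside the ``$\widetilde{\Oh}$''. A careful write-up also has to pass cleanly through the affine reparameterization of Fact~\ref{fact:affine}, including its caveat about possibly having to augment the ambient dimension, and has to confirm that the single choice $\delta_{N,p}$ simultaneously meets every ``provided that $N \geq \cdots$'' condition, so that the advertised $\widetilde{\Oh}(m)$ sample complexity genuinely holds.
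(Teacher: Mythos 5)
Your proof is correct and follows essentially the same route as the paper's: instantiate Lemma~\ref{lem:highprob} using the closeness bound \eqref{eq:robustcommitmentclose} and the concentration bound of Lemma~\ref{lem:weightedballconc}, use $\delta_{N,p} < \mathcal{Z}(\Rspace_{\kstar};(\ystar)'_{\infty})$ together with affine invariance to place the Dikin ellipsoid inside $\Rspace_{\kstar}$, and then substitute $\delta_{N,p}$. You are in fact more explicit than the paper in checking hypotheses; the only caveat is that your claim that the sample condition $N \geq 20m\matsnorm{B}{\mathsf{op}}^2/\delta_{N,p}^2$ follows from the stated threshold for every $p<1/2$ actually needs $20\matsnorm{B}{\mathsf{op}}^2/\mathcal{Z}(\Rspace_{\kstar};(\ystar)'_{\infty})^2 \leq 1$ (otherwise raising the threshold ratio to the power $1-2p<1$ shrinks it), a constant-factor slippage that the paper's own one-line proof shares and that is absorbed by the $\widetilde{\Oh}(m)$ sample requirement in Theorem~\ref{thm:mtimesnobs}.
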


\begin{proof}
This is a simple consequence of everything put together.
Since $N > m$, we have $\delta_{N,p} <  \mathcal{Z}(\Rspace_{\kstar};(\ystar)'_{\infty})$ and thus we have $\mathbb{B}_{B, \mathbf{1}, \yvec(\delta_{N,p};\ystar_{\infty}}(1) \subset \Rspace_{\kstar}$.
This tells us that
\begin{align*}
\Pr\left[\Yhat_N \notin \Rspace_{\kstar} \right] \leq 3 \exp\{-\frac{N \delta_{N,p}^2}{ 25 \matsnorm{B}{\mathsf{op}}^2}\} .
\end{align*}

and thus from Lemma~\ref{lem:highprob} we get the following expression:
\begin{align*}
\fstar_{\infty} - f_N(\yvec_N) \leq 2 \delta_{N,p} f_{max} + 3 \exp\{-\frac{N \delta_{N,p}^2}{ 25 \matsnorm{B}{\mathsf{op}}^2}\}  (\fstar_{\infty} - f_{min}) .
\end{align*}

Directly substituting the expression for $\delta_{N,p}$ in Equation~\eqref{eq:deltanp} into the above expression completes the proof.
\end{proof}

\subsection{Proof of Theorem~\ref{thm:approximation}}

\quickfigure{fig:thm3illustration_proof}{Illustration of partition of the set of follower responses, $[n]$, into sets $\{\kstar\}$ (red region), $\Kspace^*_{\mathsf{aug}}$ (blue regions) and $\Kspace^*_{\mathsf{far}}$ (yellow regions).}{Thm3illustration}{0.5\textwidth}

Recall that $\fstar_N := \max_{\xvec \in \Delta_m} f_N(\xvec)$.
To prove an upper bound on $\fstar_N$, we will upper bound $f_N(\xvec)$ for every $\xvec \in \Delta_m$.

Without loss of generality the same proof method will extend to all $\xvec \in \Delta_m$.
Denoting as shorthand $p_j(\xvec) := \Pr\left[\Xhat_N \in \Rspace_j \right]$, we have
\begin{align}\label{eq:fNxvec}
f_N(\xvec) &= \sum_{j=1}^n p_j(\xvec) \inprod{\bolda_j}{\xvec} \\
&= \sum_{j=1}^n T_j(\xvec) 
\end{align}

where we denote $T_j(\xvec) := p_j(\xvec) \inprod{\bolda_j}{\xvec}$.
We will proceed to upper bound the quantity $T_j(\xvec)$ for every $\xvec \in \Delta_m$ and every $j \in [n]$.

To do this, we will see that it is natural to divide all the pure strategy responses that are possible to a commitment $\xvec$ into three categories.
The first is the expected response $\kstar(\xvec)$.
The second is the set of responses whose regions are \textit{adjacent} to the expected response as defined below.

\begin{definition}
For a particular commitment $\xvec \in \Delta_m$, the set of \textbf{adjacent-to-expected} responses $\Kspace^*_{aug}(\xvec)$ is the set of all best-responses whose corresponding best-response-regions share a boundary with the best-response-region corresponding to the best response to $\xvec$.
Formally, we have
\begin{align*}
\Kspace^*_{\mathsf{aug}}(\xvec) := \{j \in [n]: j \neq \kstar(\xvec) \text{ and } \text{cl}(\Rspace_{\kstar(\xvec)}) \cap \text{cl}(\Rspace_j) \neq \emptyset \} .
\end{align*}

We also define $\Kspace_{\mathsf{far}} := [n] - (\{\kstar(\xvec)\} \cup \Kspace^*_{\mathsf{aug}}(\xvec))$ as the set of all follower responses that are ``far" from the expected response in this sense.
\end{definition}

The illustration in Figure~\ref{fig:thm3illustration_proof} shows this division.

For the rest of the proof, we will drop the term $\xvec$ from the notation and denote $\Kspace^*_{\mathsf{aug}} := \Kspace^*_{\mathsf{aug}}(\xvec)$ as well as $\kstar := \kstar(\xvec)$.
This is done for notational simplicity.

It is first easy to show a bound on $T_{\kstar}(\xvec)$. 
In particular, we can directly use the definition of the function $f_{\infty}(.)$ to obtain
\begin{align}\label{eq:Tkstar}
T_{\kstar}(\xvec) &= p_{\kstar}(\xvec) \inprod{\bolda_{\kstar}}{\xvec} \\
&= p_{\kstar}(\xvec) f_{\infty}(\xvec) \\
&\leq p_{\kstar}(\xvec) f^*_{\infty} .
\end{align}

This inequality is also intuitive because the leader would only hope to gain from eliciting a different-than-expected response.
Next, we deal with this cases.

\subsubsection{``Far"-from-expected responses}

We collect the set of commitments that (if observed fully) would elicit a response far away from the actual expected response. 
Formally, we denote $\Rspace_{\mathsf{far}} := \cup_{j \in \Kspace_{\mathsf{far}}} \Rspace_j$.
Now, we wish to bound the term
\begin{align*}
T_{far} := \sup_{\xvec \in \Delta_m} \sum_{j \in \Kspace_{\mathsf{far}}} T_j(\xvec) .
\end{align*}
By definition, we have $\text{cl}(\Rspace_{\kstar}) \cap \text{cl}(\Rspace_{far}) = \emptyset$.
Because we are considering \textit{finite} games, i.e. $n < \infty$, there exists a constant $C > 0$ that depends solely on the parameters of the game such that 
\begin{align}\label{eq:boundedkull}
\inf_{\xvec \in \Rspace_{\kstar},\xvec' \in \Rspace_{far}} \kull{\xvec'}{\xvec} \geq C .
\end{align}

Geometrically, Figure~\ref{fig:thm3illustration} shows this separation between the expected-response-region and any far-from-expected-response-region.
To understand the probability of eliciting such responses, we invoke a classical result from large-deviations theory, Sanov's theorem~\cite{csiszar2011information}.
The upper bound part of the theorem is restated here as a lemma and with appropriate notation.
\begin{lemma}
Let $I_1,I_2,\ldots,I_N$ be $\text{i.i.d } \sim \xvec$ for any $\xvec \in \Delta_m$ and $\Xhat_N$ denote the empirical estimate.
Then, for any region $\Rspace \subseteq \Delta_m$, we have
\begin{align}\label{eq:sanov}
\Pr\left[\Xhat_N \in \Rspace \right] \leq (N+1)^m 2^{-N\inf_{\xvec' \in \Rspace} \kull{\xvec'}{\xvec}} .
\end{align}
\end{lemma}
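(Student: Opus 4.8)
The plan is to prove this inequality by the classical \emph{method of types} from information theory. The first observation is that the empirical estimate $\Xhat_N$ is nothing but the \emph{type} (empirical distribution) of the i.i.d.\ sequence $(I_1, \ldots, I_N)$ drawn over the alphabet $[m]$, and that it can only take values in the finite collection $\mathcal{P}_N$ of length-$N$ types, i.e.\ distributions in $\Delta_m$ whose entries are integer multiples of $1/N$. I would therefore decompose the event of interest as a disjoint union over the types lying in $\Rspace$:
\begin{align*}
\Pr\left[\Xhat_N \in \Rspace\right] = \sum_{Q \in \Rspace \cap \mathcal{P}_N} \Pr\left[\Xhat_N = Q\right] .
\end{align*}
The proof then reduces to two ingredients: a crude count of how many terms appear in this sum, and a uniform per-type bound on each summand.

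For the counting ingredient, I would note that each of the $m$ coordinates of a type $Q \in \mathcal{P}_N$ is of the form $k/N$ for some integer $k \in \{0, 1, \ldots, N\}$, so there are at most $(N+1)^m$ types in total; in particular $|\Rspace \cap \mathcal{P}_N| \leq (N+1)^m$. For the per-type bound, the claim I would establish is that for every type $Q \in \mathcal{P}_N$,
\begin{align*}
\Pr\left[\Xhat_N = Q\right] \leq 2^{-N \kull{Q}{\xvec}} .
\end{align*}
This is the technical heart of the argument.

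To prove the per-type bound, I would write $\Pr[\Xhat_N = Q] = |T(Q)| \cdot \prod_{i=1}^m \xvec_i^{N Q_i}$, where $T(Q)$ is the type class of all length-$N$ sequences with type $Q$ (every such sequence has the same probability under $\xvec$). A direct calculation with the definition of KL divergence gives $\prod_{i=1}^m \xvec_i^{N Q_i} = 2^{-N(H(Q) + \kull{Q}{\xvec})}$, where $H(Q)$ denotes the base-$2$ entropy of $Q$. The remaining step is the standard type-class size bound $|T(Q)| \leq 2^{N H(Q)}$, which I would obtain by the slick observation that the probability assigned to the type class under $Q$ itself satisfies $1 \geq Q^N(T(Q)) = |T(Q)| \cdot 2^{-N H(Q)}$. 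Combining these two facts cancels the $H(Q)$ terms and yields the claimed per-type bound.

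Finally I would assemble the pieces. For every type $Q \in \Rspace \cap \mathcal{P}_N$ we have $\kull{Q}{\xvec} \geq \inf_{\xvec' \in \Rspace} \kull{\xvec'}{\xvec}$, since $Q \in \Rspace$; hence each summand is at most $2^{-N \inf_{\xvec' \in \Rspace} \kull{\xvec'}{\xvec}}$. Summing over the at most $(N+1)^m$ types in $\Rspace \cap \mathcal{P}_N$ then gives exactly Equation~\eqref{eq:sanov}. I expect the main obstacle to be the type-class size bound $|T(Q)| \leq 2^{N H(Q)}$, which is the one genuinely combinatorial step; everything else is bookkeeping with the definition of KL divergence and the elementary count of types.
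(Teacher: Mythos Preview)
Your argument is correct: this is the standard method-of-types derivation of the finite-sample upper bound in Sanov's theorem, and every step (the type count $|\mathcal{P}_N| \leq (N+1)^m$, the per-type bound $\Pr[\Xhat_N = Q] \leq 2^{-N\kull{Q}{\xvec}}$ via the type-class size inequality $|T(Q)| \leq 2^{NH(Q)}$, and the final union over types in $\Rspace$) is valid as written.

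There is nothing to compare against, however, because the paper does not supply its own proof of this lemma. It is stated there purely as a citation of the classical Sanov upper bound from large-deviations theory (the reference is to Csisz\'ar's survey), restated in the paper's notation and then immediately applied. So your proposal is not an alternative route so much as a self-contained proof of a result the paper simply imports; the method-of-types argument you give is in fact exactly the one found in the cited source.
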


Combining equations~\eqref{eq:sanov} and~\eqref{eq:boundedkull}, we therefore get
\begin{align}\label{eq:Tjfar}
T_{far} &\leq \left[\sup_{\xvec \in \Delta_m} \sum_{j \in \Kspace_{\mathsf{far}}} p_j(\xvec)\right] f_{max} \\
&\leq \left[(N+1)^m 2^{-N\inf_{\xvec \in \Rspace_{\kstar}, \xvec' \in \Rspace_{far}} \kull{\xvec'}{\xvec}}\right] f_{max}\\
&\leq (N+1)^m 2^{-NC} f_{max} \\
&= C(N+1)^m \exp\{-NC\} f_{max} . 
\end{align}

The rationale for calling these responses \textit{far-from-expected} is now clear: there is a minimum constant separation in terms of the KL-divergence from the expected best response, and so the probability of realizing these responses decreases exponentially with $N$.

Dealing with the adjacent-to-expected responses is more delicate.
We turn to this case next.

\subsubsection{Adjacent-to-expected responses}

Consider the set of adjacent-to-expected response $\Kspace^*_{\mathsf{aug}}$.
We wish to bound the term $\sum_{j \in \Kspace^*_{\mathsf{aug}}} T_j(\xvec)$.
It turns out that we can no longer control the probability that one of these responses is elicited for all choices of $\xvec \in \Rspace_{\kstar}$ -- this is because the commitment $\xvec$ could be chosen arbitrarily close to a boundary of its expected-response-region.
However, we can bound the ensuing payoff as a function of how close the commitment is to a boundary.
This notion of closeness is defined in terms of the $\ell_1$-norm below.
\begin{definition}
For a commitment $\xvec \in \Rspace_{\kstar}$ and a particular adjacent response $j \in \Kspace^*_{\mathsf{aug}}$, we define its minimum distance to the boundary by
\begin{align*}
\delta_1(\xvec;j) := \inf_{\xvec' \in \text{cl}(\Rspace_j)} \vecnorm{\xvec - \xvec'}{1} .
\end{align*}
\end{definition}

First, we use this notion to bound the maximum possible payoff that could be elicited.
\begin{lemma}\label{lem:boundinggain}
For any commitment $\xvec \in \Rspace_{\kstar}$, we have
\begin{align*}
T_j(\xvec) \leq p_j(\xvec)\left[\fstar_{\infty} + f_{max} \delta_1(\xvec;j) \right] .
\end{align*}
\end{lemma}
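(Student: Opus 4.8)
The plan is to reduce the statement to a purely geometric Lipschitz estimate on the \emph{linear} functional $\xvec \mapsto \inprod{\bolda_j}{\xvec}$. Since $T_j(\xvec) = p_j(\xvec)\inprod{\bolda_j}{\xvec}$ and $p_j(\xvec) = \Pr[\Xhat_N \in \Rspace_j] \geq 0$, it suffices to prove the pointwise inequality
\begin{align*}
\inprod{\bolda_j}{\xvec} \leq \fstar_{\infty} + f_{max}\,\delta_1(\xvec;j)
\end{align*}
and then multiply through by $p_j(\xvec)$. Here $j$ is an adjacent-to-expected response, so $\Rspace_j$---hence $\text{cl}(\Rspace_j)$---is nonempty and is a compact subset of $\Delta_m$ (the degenerate case $\Rspace_j = \emptyset$ forces $p_j(\xvec) = 0$ and the bound is then trivial).

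First I would pin down the payoff on the neighbouring region: on $\Rspace_j$ the follower responds with $j$, so $\inprod{\bolda_j}{\xvec'} = f_{\infty}(\xvec') \leq \fstar_{\infty}$ for every $\xvec' \in \Rspace_j$, and since $\xvec' \mapsto \inprod{\bolda_j}{\xvec'}$ is continuous this inequality survives passage to the limit, giving $\inprod{\bolda_j}{\xvec'} \leq \fstar_{\infty}$ for all $\xvec' \in \text{cl}(\Rspace_j)$. Next, for an arbitrary $\xvec' \in \text{cl}(\Rspace_j)$ I would split $\inprod{\bolda_j}{\xvec} = \inprod{\bolda_j}{\xvec'} + \inprod{\bolda_j}{\xvec - \xvec'}$ and bound the second term by H\"older's inequality, $\inprod{\bolda_j}{\xvec - \xvec'} \leq \vecnorm{\bolda_j}{\infty}\vecnorm{\xvec - \xvec'}{1} \leq f_{max}\vecnorm{\xvec - \xvec'}{1}$. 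Combining gives $\inprod{\bolda_j}{\xvec} \leq \fstar_{\infty} + f_{max}\vecnorm{\xvec - \xvec'}{1}$, and since $\xvec'$ ranges over the compact set $\text{cl}(\Rspace_j)$, taking the infimum (which is attained) yields $\inprod{\bolda_j}{\xvec} \leq \fstar_{\infty} + f_{max}\,\delta_1(\xvec;j)$, as required.

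There is no serious obstacle here; the only point that deserves care is the step that pushes $\inprod{\bolda_j}{\,\cdot\,} \leq \fstar_{\infty}$ from $\Rspace_j$ onto its closure, since $f_{\infty}$ itself is in general \emph{discontinuous} across best-response boundaries---the resolution being that one only needs continuity of the linear functional $\inprod{\bolda_j}{\,\cdot\,}$, not of $f_{\infty}$. It is also worth recording that the constant in the H\"older step can be tightened to $(f_{max} - f_{min})/2$ by exploiting that $\xvec - \xvec'$ has zero coordinate sum, but this refinement is not needed for the downstream argument.
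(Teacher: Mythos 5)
Your proof is correct and takes essentially the same route as the paper: pick the nearest point $\xtilde\in\text{cl}(\Rspace_j)$ in $\ell_1$, bound $\inprod{\bolda_j}{\xvec-\xtilde}$ by H\"older's inequality, and use that $\inprod{\bolda_j}{\cdot}\leq\fstar_{\infty}$ on $\text{cl}(\Rspace_j)$. The only (minor) difference is in justifying that last inequality at the boundary point: the paper invokes $\inprod{\bolda_j}{\xtilde}\leq f_{\infty}(\xtilde)$ directly via the tie-breaking rule, whereas you obtain it by passing the inequality from the interior of $\Rspace_j$ to its closure using continuity of the linear functional --- both are valid.
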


\begin{proof}
Let $\xtilde \in {\arg \min}_{\xvec' \in \text{cl}(\Rspace_j)} \vecnorm{\xvec - \xvec'}{1}$.
(Note that the minimum exists because we've taken the closure of the region.)
Using Holder's inequality, we have
\begin{align*}
\inprod{\bolda_j}{\xvec - \xtilde} &\leq \vecnorm{\bolda_j}{\infty} \vecnorm{\xvec - \xtilde}{1} \\
&\leq f_{max}\delta_1(\xvec;j) .
\end{align*}

For every $j \in \Kspace^*_{aug}$ we have
\begin{align*}
\inprod{\bolda_j}{\xvec} &\leq \inprod{\bolda_j}{\xtilde} + f_{max}\delta_1(\xvec;j) \\
&\leq f_{\infty}(\xtilde) + f_{max}\delta_1(\xvec;j) \\
&\leq \fstar_{\infty} + f_{max}\delta_1(\xvec;j) .
\end{align*}

where we are crucially using the fact that $\xtilde$ lies on the boundary and the tie-breaking assumption, to tie its payoff to the function $f_{\infty}(.)$.
Substituting the above bound into the definition of $T_j(\xvec)$ completes the proof.
\end{proof}

\quickfigure{fig:thm3illustration2}{Illustration showing the potential gain in payoff obtainable by eliciting a different-than-expected response for a $2 \times 2$ game.}{Thm3illustration2}{0.5\textwidth}

Lemma~\ref{lem:boundinggain} is important because it limits the potential of leader gain from eliciting an adjacent follower response, even if she is able to do this with high probability, i.e. by committing very close to a boundary.
Figure~\ref{fig:thm3illustration2} clearly illustrates this for a $2 \times 2$ game: here, the leader might wish to elicit different-than-expected response $2$ with high probability. 
However, to do this she would have to commit close to the boundary between regions expecting responses $1$ and $2$, resulting in her payoff being close to an objective function value of $f_{\infty}(.)$ (in the figure, depicted as the optimum payoff $\fstar_{\infty}$).
For a general $m \times n$ game, the picture stays the same.

Since the quantity $\delta_1(\xvec)$ can take values anywhere in the interval $[0,2]$ (by the triangle inequality), we will still want to control the quantity $p_j(\xvec)$ for large enough values of $\delta$.
We will again use Devroye's lemma (Lemma~\ref{lem:devroye}) for tail bounding the total variation between the empirical estimate of a distribution and a true distribution.
Recall that the condition required for it to be applied was $\delta \geq \sqrt{\frac{20m}{N}}$.

It is natural to further divide the set $\Kspace^*_{\mathsf{aug}}$ into two subsets, defined by the commitment $\xvec$.
\begin{align*}
\Kspace^*_{\mathsf{aug},1}(\xvec) &:= \{j \in \Kspace^*_{\mathsf{aug}}: \delta_1(\xvec) \leq \sqrt{\frac{20m}{N}} \} \\
\Kspace^*_{\mathsf{aug},2}(\xvec) &:= \{j \in \Kspace^*_{\mathsf{aug}}: \delta_1(\xvec) > \sqrt{\frac{20m}{N}} \} .
\end{align*}

Let's consider these subsets one-by-one.
First, we use Lemma~\ref{lem:boundinggain} and the definition of the subset $\Kspace^*_{\mathsf{aug},1}(\xvec)$ to get
\begin{align}\label{eq:Tjadj1}
\sum_{j \in \Kspace^*_{\mathsf{aug},1}(\xvec)} T_j(\xvec) &= \sum_{j \in \Kspace^*_{\mathsf{aug},1}(\xvec)} p_j(x) \inprod{\bolda_j}{\xvec} \notag \\
&\leq \sum_{j \in \Kspace^*_{\mathsf{aug},1}(\xvec)} p_j(x)\left[\fstar_{\infty} + f_{max} \delta_1(\xvec;j)\right] \notag \\
&\leq \sum_{j \in \Kspace^*_{\mathsf{aug},1}(\xvec)} p_j(x)\left[\fstar_{\infty} + f_{max} \sqrt{\frac{20m}{N}}\right] \notag \\
&\leq \left[\sum_{j \in \Kspace^*_{\mathsf{aug},1}(\xvec)} p_j(x) \right] \fstar_{\infty} + f_{max} \sqrt{\frac{20m}{N}} .
\end{align}

Next, we consider the term $\sum_{j \in \Kspace^*_{\mathsf{aug},2}(\xvec)} T_j(\xvec)$.
We state and prove the following lemma.
\begin{lemma}\label{lem:Tjadj2}
For any commitment $\xvec \in \Delta_m$, we have
\begin{align}\label{eq:Tjadj2}
\sum_{j \in \Kspace^*_{\mathsf{aug},2}(\xvec)} T_j(\xvec) \leq \left[\sum_{j \in \Kspace^*_{\mathsf{aug},2}(\xvec)}p_j(\xvec)\right] \fstar_{\infty} + 3|\Kspace^*_{\mathsf{aug},2}(\xvec)| f_{max} \sqrt{\frac{20m}{N}} .
\end{align}
\end{lemma}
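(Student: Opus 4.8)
The plan is to treat each $j \in \Kspace^*_{\mathsf{aug},2}(\xvec)$ in isolation and then sum over this set. The starting point is Lemma~\ref{lem:boundinggain}, which I would split as
\[
T_j(\xvec) \;\leq\; p_j(\xvec)\,\fstar_{\infty} \;+\; p_j(\xvec)\, f_{max}\,\delta_1(\xvec;j) ,
\]
so that it suffices to show $p_j(\xvec)\, f_{max}\,\delta_1(\xvec;j) \leq 3 f_{max}\sqrt{20m/N}$ for every such $j$. The key point — and the reason the set $\Kspace^*_{\mathsf{aug}}$ was cut into the two pieces $\Kspace^*_{\mathsf{aug},1}$ and $\Kspace^*_{\mathsf{aug},2}$ — is that on $\Kspace^*_{\mathsf{aug},2}$ the probability $p_j(\xvec)$ is itself small exactly when $\delta_1(\xvec;j)$ is large, so the exponential decay of $p_j(\xvec)$ beats the linear growth in $\delta_1(\xvec;j)$.

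To make this quantitative, first note that since $\xvec \in \Rspace_{\kstar}$ and $\Rspace_j \subseteq \mathrm{cl}(\Rspace_j)$, the event $\{\Xhat_N \in \Rspace_j\}$ forces $\vecnorm{\Xhat_N - \xvec}{1} \geq \delta_1(\xvec;j)$, directly from the definition of $\delta_1(\xvec;j)$ as an infimum over $\mathrm{cl}(\Rspace_j)$. Hence $p_j(\xvec) \leq \Pr[\vecnorm{\Xhat_N - \xvec}{1} \geq \delta_1(\xvec;j)]$. Since $j \in \Kspace^*_{\mathsf{aug},2}(\xvec)$ means precisely $\delta_1(\xvec;j) > \sqrt{20m/N}$, Devroye's lemma (Lemma~\ref{lem:devroye}) applies with $\delta = \delta_1(\xvec;j)$ and yields $p_j(\xvec) \leq 3\exp\{-N\,\delta_1(\xvec;j)^2/25\}$.

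It remains to bound the scalar quantity $g(t) := t\,e^{-Nt^2/25}$ over the range $t \geq \sqrt{20m/N}$. Differentiating, $g'(t) = e^{-Nt^2/25}\,(1 - 2Nt^2/25)$, so $g$ is nonincreasing once $t \geq \sqrt{25/(2N)}$; and because $m \geq 1$ we have $\sqrt{20m/N} \geq \sqrt{25/(2N)}$, so the maximum of $g$ on the relevant range is attained at the left endpoint, $g(\sqrt{20m/N}) = \sqrt{20m/N}\,e^{-4m/5} \leq \sqrt{20m/N}$. Combining the last two paragraphs, $p_j(\xvec)\, f_{max}\,\delta_1(\xvec;j) \leq 3 f_{max}\, g(\delta_1(\xvec;j)) \leq 3 f_{max}\sqrt{20m/N}$, hence $T_j(\xvec) \leq p_j(\xvec)\,\fstar_{\infty} + 3 f_{max}\sqrt{20m/N}$, and summing over $j \in \Kspace^*_{\mathsf{aug},2}(\xvec)$ gives exactly \eqref{eq:Tjadj2}.

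I expect the only mildly delicate step to be the function-maximization argument: one has to verify that the threshold $\sqrt{20m/N}$ imposed by Devroye's lemma already lies past the peak of $t\,e^{-Nt^2/25}$, so that the worst case over admissible $\delta_1(\xvec;j)$ is the boundary value rather than an interior maximum. Everything else is a direct chaining of Lemma~\ref{lem:boundinggain}, the geometric lower bound $\vecnorm{\Xhat_N - \xvec}{1} \geq \delta_1(\xvec;j)$ on $\ell_1$-displacement, and Devroye's lemma, with no further estimates required.
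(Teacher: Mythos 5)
Your proposal is correct and follows essentially the same route as the paper's proof: bound $p_j(\xvec)$ via the $\ell_1$-displacement forced by $\Xhat_N$ landing in $\Rspace_j$, apply Devroye's lemma (valid since $\delta_1(\xvec;j) > \sqrt{20m/N}$ on $\Kspace^*_{\mathsf{aug},2}$), and then observe that $t\,e^{-Nt^2/25}$ is decreasing past $\sqrt{20m/N}$ so the boundary value $\sqrt{20m/N}\,e^{-4m/5} \leq \sqrt{20m/N}$ dominates. Your explicit derivative computation just fills in the step the paper dismisses as ``easy to verify''; nothing is missing.
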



\begin{proof}
Consider any $j \in \Kspace^*_{\mathsf{aug},2}(\xvec)$.
Now note that by the definition of $\delta_1(\xvec;j)$, we can denote the open $\ell_1$ ball with center $\xvec$ and radius $\delta_1(\xvec)$ by $B_1(\xvec;\delta_1(\xvec))$.
By the definition of $\delta_1(\xvec;j)$, it follows that $B_1(\xvec;\delta_1(\xvec;j)) \cap \Rspace_j = \emptyset$.
Therefore, we have
\begin{align*}
p_j(\xvec) &= \Pr\left[\Xhat_N \in \Rspace_j \right] \\
&\leq \Pr\left[\Xhat_N \notin B_1(\xvec;\delta_1(\xvec;j)) \right] \\
&= \Pr\left[\vecnorm{\Xhat_N - \xvec}{1} \geq \delta_1(\xvec;j) \right] \\
&\leq 3\exp\{-\frac{N\delta_1(\xvec)^2}{25}\}
\end{align*}

where we used Lemma~\ref{lem:devroye} in the last inequality since we have $\Kspace^*_{\mathsf{aug},2}(\xvec)$, we have $\delta_1(\xvec) \geq \sqrt{\frac{20m}{N}}$.

Combining this with Lemma~\ref{lem:boundinggain}, we then have
\begin{align*}
T_j(\xvec) \leq p_j(\xvec)\fstar_{\infty} + f_{max} 3 \delta_1(\xvec;j) \exp\{-\frac{N\delta_1(\xvec;j)^2}{25}\} .
\end{align*}

Next, it is easy to verify that the function $g_2(\delta) = \delta \exp\{-\frac{N\delta^2}{25}\}$ is decreasing in $\delta$ over the domain $\delta \geq \sqrt{\frac{20m}{N}}$ for all $m \geq 1$.
This tells us that
\begin{align*}
3 \delta_1(\xvec;j) \exp\{-\frac{N\delta_1(\xvec;j)^2}{25}\} &\leq 3 \sqrt{\frac{20m}{N}} \exp\{-\frac{4m}{5}\} \\
&\leq 3 \sqrt{\frac{20m}{N}} 
\end{align*}

and so we have
\begin{align}\label{eq:Tjadj2ind}
T_j(\xvec) \leq p_j(\xvec) \fstar_{\infty} + 3 f_{max} \sqrt{\frac{20m}{N}} .
\end{align}

Summing over all $j \in \Kspace^*_{\mathsf{aug},2}(\xvec)$ and substituting Equation~\eqref{eq:Tjadj2ind} then proves the lemma.


\end{proof}



\subsubsection{Putting it all together}

Combining Equations~\eqref{eq:Tkstar},~\eqref{eq:Tjfar},~\eqref{eq:Tjadj1} and~\eqref{eq:Tjadj2} into Equation (\ref{eq:fNxvec}), we have
\begin{align*}
f_N(\xvec) &= \sum_{j=1}^n T_j(\xvec) \\
&\leq p_{\kstar}(\xvec) f^*_{\infty} + C(N+1)^m \exp\{-NC\} f_{max} + \left[\sum_{j \in \Kspace^*_{aug}} p_j(\xvec)\right] \fstar_{\infty} + (3|\Kspace^*_{\mathsf{aug},2}(\xvec)| + 1) f_{max} \sqrt{\frac{20m}{N}} \\ 
&\leq f^*_{\infty} + C(N+1)^m \exp\{-NC\} f_{max} + 4n f_{max} \sqrt{\frac{20m}{N}} \\
&\leq \fstar_{\infty} + Cn f_{max} \sqrt{\frac{20m}{N}} 
\end{align*}

for some constant $C > 0$.
This inequality holds for any $\xvec \in \Delta_m$.
This implies that $\fstar_N \leq \fstar_{\infty} + Cn\sqrt{\frac{m}{N}}$, thus completing the proof of Theorem~\ref{thm:approximation}.
\qed

\end{document}